\def\br{{\bf r}}
\def\bs{{\boldsymbol s}}
\def\bv{{\boldsymbol v}}
\newcommand{\om}{\omega}
\newcommand{\al}{\alpha}
\newcommand{\eq}[1]{Eq.~(\ref{eq:#1})}
\newcommand{\fig}[1]{Fig.~\ref{fig:#1}}
\renewcommand{\sec}[1]{Sec.~\ref{sec:#1}}
\newcommand{\app}[1]{Appendix~\ref{sec:#1}}
\newcommand{\thm}[1]{Theorem~\ref{thm:#1}}
\newcommand{\pro}[1]{Proposition~\ref{thm:#1}}
\newcommand{\lem}[1]{Lemma~\ref{thm:#1}}
\newcommand{\alg}[1]{Algorithm~\ref{alg:#1}}
\newcommand{\met}[1]{Method~\ref{met:#1}}
\newcommand{\defi}[1]{Definition~\ref{thm:#1}}
\renewcommand{\mod}{\mathrm{\,mod\,}}
\newcommand{\mc}[1]{\mathcal{#1}}
\let\perptmp\perp
\renewcommand{\perp}{{\! \mathsmaller{\perptmp}}}
\newcommand{\mrm}{\mathrm}
\def\br{{\bf r}}
\newtheorem{theorem}{Theorem}
\newtheorem{lemma}{Lemma}
\newtheorem{proposition}{Proposition}
\theoremstyle{definition}
\newtheorem{definition}{Definition}
\newtheorem{algorithm}{Algorithm}
\newtheorem{method}{Method}
\def\@bibdataout@aps{%
  \immediate\write\@bibdataout{%
    @CONTROL{%
      apsrev41Control%
      \longbibliography@sw{%
        ,author="08",editor="1",pages="1",title="0",year="1"%
      }{%
        ,author="08",editor="1",pages="1",title="",year="1"%
      }%
    }%
  }%
  \if@filesw \immediate \write \@auxout {\string \citation {apsrev41Control}}\fi
}
\newcommand{\heavy}{\mathrm{heavy}}
\definecolor{blue}{rgb}{0.12156862745098039, 0.4666666666666667, 0.7058823529411765}
\definecolor{orange}{rgb}{1.0, 0.4980392156862745, 0.054901960784313725}
\definecolor{green}{rgb}{0.17254901960784313, 0.6274509803921569, 0.17254901960784313}
\definecolor{red}{rgb}{0.8392156862745098, 0.15294117647058825, 0.1568627450980392}
\definecolor{4}{rgb}{0.5803921568627451, 0.403921568627451, 0.7411764705882353}
\definecolor{5}{rgb}{0.5490196078431373, 0.33725490196078434, 0.29411764705882354}
\definecolor{6}{rgb}{0.8901960784313725, 0.4666666666666667, 0.7607843137254902}
\definecolor{7}{rgb}{0.4980392156862745, 0.4980392156862745, 0.4980392156862745}
\definecolor{8}{rgb}{0.7372549019607844, 0.7411764705882353, 0.13333333333333333}
\definecolor{9}{rgb}{0.09019607843137255, 0.7450980392156863, 0.8117647058823529}
\newcommand{\relem}[2]{
\vspace{.5em}
\noindent\textbf{\lem{#1}.}
\emph{#2}
}
\newcommand{\rethm}[2]{
\vspace{.5em}
\noindent\textbf{\thm{#1}.}
\emph{#2}
}
\newcommand{\realg}[2]{
\vspace{.5em}
\noindent\textbf{\alg{#1}.}
\emph{#2}
}
\begin{document}

\author{Joris Kattem\"olle}
\affiliation{Department of Physics, University of Konstanz, D-78457 Konstanz, Germany}
\title{Edge coloring lattice graphs}

\begin{abstract}
  We develop the theory of the edge coloring of infinite lattice graphs, proving a necessary and sufficient condition for a proper edge coloring of a patch of a lattice graph to induce a proper edge coloring of the entire lattice graph by translation. This condition forms the cornerstone of a method that finds nearly minimal or minimal edge colorings of infinite lattice graphs. In case a nearly minimal edge coloring is requested, the running time is $O(\mu^2 D^4)$, where $\mu$ is the number of edges in one cell (or `basis graph') of the lattice graph and $D$ is the maximum distance between two cells so that there is an edge from within one cell to the other. In case a minimal edge coloring is requested, we lack an upper bound on the running time, which we find need not pose a limitation in practice; we use the method to minimal edge color the meshes of all $k$-uniform tilings of the plane for $k\leq 6$, while utilizing modest computational resources. We find that all these lattice graphs are Vizing class~I. Relating edge colorings to quantum circuits, our work finds direct application by offering minimal-depth quantum circuits in the areas of quantum simulation, quantum optimization, and quantum state verification.
\end{abstract}
\maketitle

\section{Introduction}\label{sec:introduction}
An edge coloring of a graph $G$ is an assignment of colors to all its edges. An edge coloring is considered proper if no two edges of the same color are incident on the same vertex. It is called minimal if it additionally uses the least possible different colors. The number of colors used in a minimal edge coloring of $G$ is called the \emph{edge chromatic number} or chromatic index $\chi'(G)$. The problem of finding a minimal edge coloring of a graph is called the \emph{edge coloring problem}.  Vizing~\cite{vizing1964estimate,fiorini1977edge} proved that any finite simple graph $G$ with maximum degree $\Delta(G)$ belongs to one of two classes;
\begin{equation}
  \begin{array}{lll}
    \chi'(G)=\Delta(G)&\ \ &\text{(class I)},\\
    \chi'(G)=\Delta(G)+1&\ \ &\text{(class II)}.
  \end{array}\label{eq:vizing}
\end{equation}
Vizing's theorem was discovered independently by Gupta~\cite{gupta1967studies,stiebitz2012graph}, whose proof also applies to infinite graphs with a finite maximum degree. The problem of determining the class of a graph is NP-complete~\cite{holyer1981np_completeness}. Nevertheless, using the edge coloring algorithm by Misra and Gries \cite{misra1992constructive}, a proper edge coloring of a simple graph $G$ using at most $\Delta(G)+1$ colors is found in polynomial time.

We call a proper edge coloring of $G$ using at most $\Delta(G)$ colors a type-I coloring of $G$. If it exists, a type-I coloring is always minimal. We call a proper edge coloring of $G$ using at most $\Delta(G)+1$ colors a type-II coloring. A type-II coloring of a graph $G$ always exists. It is minimal in case $G$ is a class~II graph and \emph{nearly minimal} in case $G$ is a class~I graph, in which case the edge coloring uses just one excess color.

\begin{figure}[b]
  \centering
  \includegraphics[width=\columnwidth]{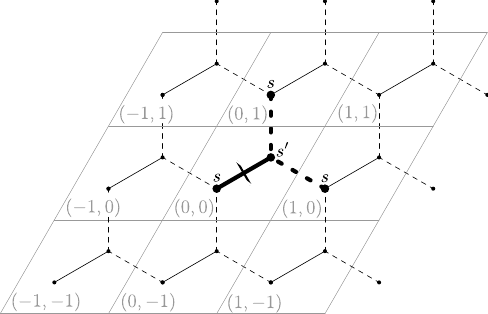}
  \caption{\label{fig:lattice_graph} Lattice graphs. A basis graph $B$ (bold vertices and edges), containing $\mu$ edges, induces a lattice graph $\mc G$ by the union of infinitely many translated copies of $B$. A patch $P_{n,m}$ of $\mc G$ is formed by the union of $n \times m$ copies of $B$. (In the figure, $n=m=3$, and $P_{n,m}$ is formed by all vertices and edges shown). Lattice graphs need not be planar, as symbolized by the slash through the edge between vertices $s$ and $s'$. The seeds $\mathscr S$ are the vertices inside the unit cell [parallelogram labeled $(0,0)$]. An edge-colored patch $C(P_{n,m})$ induces an edge coloring $\mc C$ of $\mc G$ by the union of infinitely many translated copies of $C(P_{n,m })$ (not illustrated here).
  }
\end{figure}

In this paper, we present a method that finds type-I or type-II edge colorings of lattice graphs. These graphs are illustrated in \fig{lattice_graph} and introduced formally in \sec{lattice_graphs}. However efficient, no standard methods for the type-I or type-II edge coloring of lattice graphs can be used because lattice graphs are infinite. To find a type-II edge coloring of a lattice graph, our method (\sec{edge_coloring_lattice_graphs}) searches a patch of the lattice graph that is self-loop-free and allows a type-II coloring after imposing periodic boundary conditions on the patch. We prove that such a patch is always found. Subsequently, the patch is type-II edge colored and the periodic boundary conditions are lifted. We prove that the resulting colored patch can be repeated indefinitely to form a type-II (and hence also proper) edge coloring of the entire lattice graph. The self-loop-free property of the wrapped patch plays a crucial role in this proof.

Similarly, to obtain a type-I edge coloring, a patch is searched that is self-loop-free and permits a \emph{type-I} coloring after imposing periodic boundary conditions. We prove that, if such a patch is found, it can be repeated indefinitely to produce a type-I (and therefore always \emph{minimal}) edge coloring of the lattice graph. However, it is no longer guaranteed that such a patch is found. First, the lattice graph may be class~II. Secondly, even if we assume that the lattice graph is class~I, it remains an open question whether there exists a patch that is self-loop-free and permits a type-I coloring after imposing periodic boundary conditions on that patch.

We implement and apply our edge coloring method to a plethora of lattice graphs (\sec{explicit_minimal_edge_colorings}), each time requesting a type-I edge coloring of the lattice graph. For all 1317 lattice graphs, a type-I coloring was found without prior knowledge of the class of these graphs. This demonstrates that, despite the lack of theoretical guarantees, in practice the method is effective in finding type-I and hence minimal edge colorings of lattice graphs. Among the lattice graphs used are the meshes of all $k$-uniform tilings of the plane for $k\leq 6$, which are hence all class~I.  We therefore hypothesize that the $k$-uniform tilings of the plane are class~I for all $k\geq 1$. At the same time, we construct a (planar) lattice graph that provably class~II and also use our method to edge color it minimally.

For practical applications, one is typically not interested in a type-I or type-II edge coloring of an infinite lattice graph, but rather in a type-I or type-II edge coloring of a finite but possibly large patch of that lattice graph. In addition to providing type-I or type-II edge colorings of infinite lattice graphs, our method provides significant computational advantages for the type-I or type-II edge coloring of these finite but large patches. This is because our method typically needs to find a type-I or type-II edge coloring of a small subgraph of the large patch. Once this is achieved, the solution can be extended efficiently to patches of arbitrary size. In contrast, methods that do not leverage the translational symmetry of lattice graphs have to find the type-I or type-II edge coloring of the entire large patch.

One important application of minimal edge coloring lattice graphs is in quantum computing, where proper edge colorings can be used to represent quantum circuits for tasks in quantum simulation, quantum optimization, and quantum state verification (\sec{previous_work_and_applications}). In this context, patches of minimally edge-colored lattice graphs correspond to minimal-depth quantum circuits.

\section{Lattice graphs}\label{sec:lattice_graphs}
For an initial understanding of the edge coloring method (\sec{edge_coloring_lattice_graphs}), the definitions in \fig{lattice_graph} may suffice. Readers mainly interested in the practical applicability of the method may proceed directly to \fig{coloring} for an initial understanding of the method and then to \sec{explicit_minimal_edge_colorings} for its application. However, the formal results in \sec{analysis} necessitate formal definitions, which we provide here. The terminology is derived from terminology standard in solid state-physics~\cite{ashcroft1976solid}.

A two-dimensional \emph{Bravais lattice} $\tilde{\mc{V}}$ [including points labeled $s$ (not $s'$) in \fig{lattice_graph}] is an infinite set of points in the Euclidean plane obtained by acting on a single point with a symmetry group generated by translations along the vectors $\bv_1,\bv_2\in\mathbb R^2$; $\tilde{\mc V}=\{x\,\bv_1+y\,\bv_2 \mid (x,y)\in \mathbb Z^2\}$. In this paper, we assume 2-dimensional lattices for ease of notation throughout. However, all methods and results can be straightforwardly generalized to any dimension by de- or increasing the dimension of the vectors.

A \emph{geometric lattice with seeds} $\mc V$ is a Bravais lattice in which a set $\mathscr S$ (points inside the parallelogram $(0,0)$ in \fig{lattice_graph}), containing vectors $\br\in \mathbb R^2$, known as seeds, is translated instead of a single point~\footnote{In Ref.~\cite{ashcroft1976solid}, this is called a `lattice with a basis'. However, to avoid confusion, we reserve the use of `basis' for the basis graph and use `seeds' instead of `basis' here. This is the terminology from Refs.~\cite{sanchez2019acquiring,sanchez2021integer,medeiros2018synthesizing}};
\begin{equation}\label{eq:seed_lattice}
\mc V=\{\br+x \bv_1+y \bv_2 \mid \br \in \mathscr S, (x,y) \in \mathbb Z^2\}.
\end{equation}
Without loss of generality, we assume that all seeds are in the unit cell, which is the region delimited by the parallelogram defined by the vectors $\bv_1,\bv_2$. A Bravais lattice is a geometric lattice with a single seed.

In this paper, a \emph{graph} (without qualification) is a tuple $G=(V(G),E(G))$, with vertices $V(G)$ and edges $E(G)$ countable sets that may contain self-loops and multi-edges. A \emph{self-loop} is an edge that connects a vertex with itself and a \emph{multi-edge} is a set of edges between a single pair of vertices. (For distinguishability of the edges in a multi-edge, each edge in a multi-edge is labeled with a label unique to that multi-edge.) Hence, a \emph{self-loop-free graph} may contain multi-edges but not self-loops. A \emph{simple graph} does not contain self-loops nor multi-edges. Whenever we want to emphasize that a graph may contain self-loops or multi-edges, we refer to it as a \emph{multigraph}.

We define a \emph{geometric lattice graph} $\mc G$ as a graph obtained by the union of translated copies of a \emph{geometric basis graph} $B$, similar to how a geometric lattice with seeds is generated by repeating the set of seeds $\mathscr S$. The vertices $V(B)$ of the geometric basis graph are vectors $\br\in\mathbb R^2$ and the edges $E(B)$ are hence sets of the form $\{\br,\bs\}$, which contain two vertices $\br,\bs$. In \fig{lattice_graph}, the basis graph of the honeycomb lattice graph is shown as the Y-shaped figure with bold vertices and lines. Formally, we define the geometric lattice graphs as infinite graphs of the form $\mc G=(V(\mc G), E(\mc G))$, where $V(\mc G)=\mc V$ as in \eq{seed_lattice} and with edges
\begin{equation}
  \begin{aligned}
    E(\mc G)&=\mathlarger{\{}\{\br+x \bv_1+y \bv_2,\bs+x \bv_1+y \bv_2\}  \\ &\phantom{=\{}\mid \{\br,\bs\}\in E(B),\, (x,y) \in \mathbb Z^2\mathlarger{\}}.
  \end{aligned}
\end{equation}

The specific Cartesian coordinates of the vertices of geometric lattice graphs are irrelevant for the minimal edge coloring problem. Moreover, using Cartesian coordinates may lead to floating-point issues when identifying equivalent vertices from two translated copies of the geometric basis graph. Furthermore, we may want to edge color graphs with no inherent geometrical meaning. Therefore, we primarily use (nongeometric) basis graphs, which are induced by (nongeometric) lattice graphs. They are essentially geometric lattice graphs where the vertices are not general vectors in the Euclidean plane but tuples of the form $(dx,dy,s)$, with $(dx,dy)$ denoting the cell in which a vertex is located (also see the gray cell labels in \fig{lattice_graph}), and $s$ being some unique identifier to distinguish seeds in a cell. For simplicity, we assume the seeds are identified by consecutive integers $s\in S=\{0,\ldots,n_S-1\}$.

\begin{definition}[Basis graph]\label{thm:basis_graph}
  A \emph{basis graph} $B=(V(B),E(B))$ with seed numbers $S=\{0,\ldots,n_S-1\}$ is a finite, simple graph, with all vertices of the form
\begin{equation}
  (dx,dy,s)\in V(B).
\end{equation}
 Here, $(dx,dy)\in\mathbb Z^2$ labels the \emph{cell} the vertex is in, and $s\in S$. Any vertex of $B$ of the form $(0,0,s)$, is called a \emph{seed} of $B$ and all other vertices of $B$ are called nonseeds.  An edge of $B$ is \emph{redundant} if it can be removed without altering the lattice graph induced by $B$. We demand that $B$ does not contain edges from nonseeds to nonseeds and that $B$ does not contain redundant edges.
\end{definition}

\begin{definition}[Lattice graph]\label{thm:lattice_graph}
A \emph{lattice graph} $\mc G=(V(\mc G),E(\mc G))$ induced by a basis graph $B$ is a graph obtained by the union of infinitely many translated copies of $B$;
\begin{equation}
  \begin{aligned}\label{eq:lattice_graph}
V(\mc G) &= \{v+(x,y,0)\mid v\in V(B),\,(x,y)\in \mathbb Z^2\},\\
  E(\mc G) &= \mathlarger{\{}\{v+(x,y,0),v'+(x,y,0)\} \\ &\phantom{=\{}\mid \{v,v'\} \in E(B),\,(x,y)\in \mathbb Z^2\mathlarger{\}}.
  \end{aligned}
\end{equation}
\end{definition}
By this definition, lattice graphs are simple, but they need not be planar.

The demand that $(i)$ $B$ does not contain redundant edges is without loss of generality, since it does not change the possible lattice graphs that can be constructed. Also, the demand that $(ii)$ $B$ does not contain edges from nonseeds to nonseeds is without loss of generality. Consider a graph $B'$ that is a basis graph, with the only exception that it contains edges from nonseeds to  nonseeds. Each of these edges can be translated [by adding $(x,y,0)$ for some $(x,y\in \mathbb Z^2)$ to both vertices] so that one of the vertices becomes a seed. The resulting basis graph $B$ will induce the same lattice graph as $B'$.

As an example, consider the honeycomb lattice graph (\fig{lattice_graph}). It is induced by the basis graph $B$ with vertices $V(B)=\{a,b,c,d\}$, where $a=(0,0,0)$, $b=(0,0,1)$, $c=(1,0,0)$, and $d=(0,1,0)$, and edges $E(B)=\mathlarger{\{}\{a,b\},\{b,c\},\{b,d\}\mathlarger{\}}$. The addition of the edge $\{a,(0,-1,1)\}$ to $E(B)$ would render either $\{a,(0,-1,1)\}$ or $\{a,c\}$ redundant $(i)$. The basis graph $B'$, where, for example, the edge $\{a,b\}$ is translated so that it entirely lies inside the cell $(1,0)$, generates the same honeycomb lattice graph as $B$ $(ii)$.

\begin{definition}[Patch]
  An $n$ by $m$ \emph{patch} $P_{n,m}=P_{n,m}(B)$ of a lattice graph $\mc G$ induced by a basis graph $B$ is a graph constructed by the union of $n$ by $m$ translated copies of $B$,
\begin{equation}
  \begin{aligned}\label{eq:patch}
V(P_{n,m})&= \{v+(x,y,0)\mid v\in V(B),\,(x,y)\in \mathbb Z_n \times \mathbb Z_m)\},\\
  E(P_{n,m}) &= \mathlarger{\{}\{v+(x,y,0),v'+(x,y,0)\} \\ &\phantom{=\{}\mid \{v,v'\} \in E(B),\,(x,y)\in\mathbb Z_n \times \mathbb Z_m \mathlarger{\}},
  \end{aligned}
\end{equation}
where $\mathbb Z_i=\{0,1,\ldots,i-1\}$.
\end{definition}
It follows from the definition of basis graphs that a patch of a lattice graph is simple and does not contain redundant edges. Any basis graph can be seen as a patch for which $n=m=1$. In contrast to the basis graphs, patches with $n\neq 1$ or $m\neq 1$ contain edges from nonseeds to nonseeds.

A graph $C(G)$ is an edge coloring of a graph $G$ if it is equal to $G$ except for an additional assignment of `colors' $c\in \mathbb N_0$ to all edges. With $G=P_{n,m}$, $C[P_{n,m}]$ is an edge colored patch of $n$ by $m$ basis graphs,
\begin{equation}
  \begin{aligned}
    V[C(P_{n,m})]&=V(P_{n,m})\\
    E[C(P_{n,m})]&=\mathlarger{\{(\{v,v'\},c_{v,v'})\mid \{v,v'\} \in E(P_{n,m})\mathlarger\}}. \label{eq:edge_coloring}
  \end{aligned}
\end{equation}
This graph induces an edge colored lattice graph $\mc C=(V(\mc C),E(\mc C))$ by
\begin{equation}
  \begin{aligned}
  V(\mc C) &= \{v+(n\,x,m\,y,0)\mid v\in V[C(P_{n,m})],\,(x,y)\in \mathbb Z^2\},\\
  E(\mc C) &= \mathlarger{\{}(\{v+(n\,x,m\,y,0),v'+(n\,x,m\,y,0)\},c)\\ &\phantom{=\{}\mid (\{v,v'\},c_{v,v'}) \in E[C(P_{n,m})],\,(x,y)\in \mathbb Z^2\mathlarger{\}}.
  \end{aligned}\label{eq:edge_colored_lattice_graph}
\end{equation}
The absence of redundant edges in $B$ assures $\mc C$ is an edge coloring of $\mc G$. If $C(P_{n,m})$ (at given $n,m$) induces an edge coloring $\mc C$ of a lattice graph $\mc G$, we call $C(P_{n,m})$ a \emph{coloring basis graph} of $\mc G$.

\section{Method}\label{sec:edge_coloring_lattice_graphs}

\begin{figure}
  \centering
  \includegraphics[width=\columnwidth]{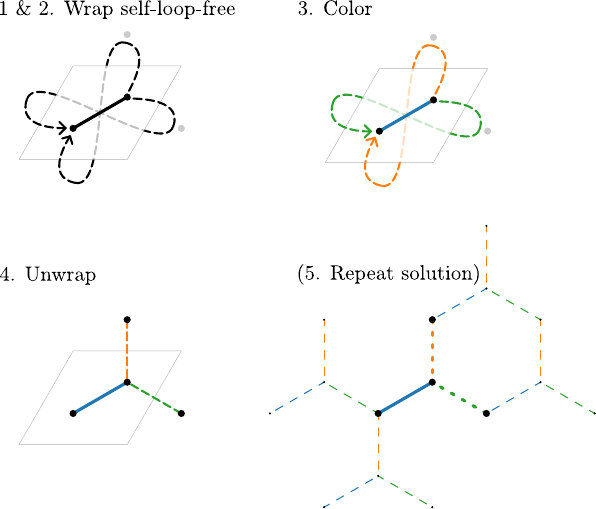}
  \caption{\label{fig:coloring} Main stages of \met{edge_coloring}. The preprocessing stage is to define the lattice graph by its basis graph (\fig{lattice_graph}). Then, stages 1--3 are repeated with increasing patch size until a patch is found at stage 3 that is self-loop-free and permits a type-$t$ ($t=1,2,3$) edge coloring. At stage 4, this patch is unwrapped, retaining its edge coloring. As a postprocessing stage (stage 5), the infinite lattice graph is generated purely abstractly, or a finite patch is generated explicitly.
  }
\end{figure}

\label{sec:method}
A preprocessing stage is to define the (possibly geometric) lattice graph $\mc G$ by a basis graph $B$ as defined in \defi{basis_graph} and \fig{lattice_graph}. The method receives $B$, a trit $t\in\{1,2,3\}$, and an initial patch size $(n,m)$ as input, where $t$ indicates whether the objective is to find a type-I ($t=1$), type-II ($t=2$) or proper ($t=3$) edge coloring. Below, we interleave the method (italic text) with further definitions and explanations (regular text). The main stages of the method are depicted in \fig{coloring}. Throughout this paper, we initially choose $n=m=1$, so that initially $P_{n,m}=B$, unless specified otherwise.

\begin{method}[Edge color lattice graphs]\label{met:edge_coloring}
  \

\emph{1. Wrap.---Construct the patch $P_{n,m}$ and wrap it to obtain the multigraph $\tilde P_{n,m}$.}

Periodic boundary conditions are imposed on a vertex $(x,y,s)\in V(P_{n,m})$ by
\begin{equation}
  \begin{aligned}
  w[(x,y,s)]:=(x \mod n, y \mod m,s).
\end{aligned}
\end{equation}
Here, we use the `floored divisor' definition of the $\mod$ operator; $x\mod n=x-n\lfloor x/n\rfloor$. The function $w$ naturally induces a map from simple graphs $G$ (such as $P_{n,m}$) to multigraphs by applying $w$ to all vertices and to all vertices in all edges of $G$, not discarding any self-loops and not merging any equal edges that may arise. We denote this induced map by $w$ as well.

The map $w$ is not an invertible map from simple graphs to multigraphs because the modulo operation is not invertible. We therefore construct an invertible map $W$ we call \emph{wrapping}. It is initially defined on edges $\{v_1,v_2\}$ (not vertices), by

 \begin{equation}\label{eq:wrap}
W(\{v,v'\}):=(\{w(v),w(v')\},\{v,v'\}).
 \end{equation}
 The map $W$ naturally induces a map $W$ from edge sets of simple graphs [such as $E(P_{n,m})$] to edge sets of multigraphs by element-wise action; $W[E(P_{n,m})]=\{W(e)\mid e\in E(P_{n,m})\}$. Since $P_{n,m}$ is simple, the edge label carrying the old edge data, $\{v,v'\}$, serves as a unique label distinguishing the edges in a multi-edge. The map $W$ can once more be extended to a map $W$ from simple graphs to multigraphs by acting nontrivially on the edge set of the graph only; $W(P_{n,m})=(V(P_{n,m}),W[E(P_{n,m})])$. This defines the \emph{wrapped patch} $\tilde P_{n,m}$,
 \begin{equation}
\tilde P_{n,m}:=W(P_{n,m}).
\end{equation}
In \fig{coloring}, the arrowheads and gray vertices make it visually clear how the wrapping should be undone, and therefore represent the old edge data.

\emph{2. No self-loops.---Assert that $\tilde P_{n,m}$ is self-loop-free. If not, choose a new patch size $(n,m)$ and return to stage 1.}

In \sec{analysis} and \app{proofs} (\thm{proper_edge_coloring}), we prove that the absence of self-loops in $\tilde P_{n,m}$ is necessary and sufficient for the output of the current method to induce a \emph{proper} edge coloring of the lattice graph $\mc G$. After the exposition of the method, we also illustrate the necessity of the self-loop-free property by example.

If $\tilde P_{n,m}$ has self-loops, a new candidate patch needs to be proposed. We choose  a new $(n,m)$ such that the size $n\times m$ of the patch is nondecreasing. That is, we choose $(n,m)$ from the sequence $ a=((1,1),(1,2),(2,1),(1,3),(3,1),(2,2),\ldots)$. Such a sequence gives a bijection between pairs $a_i$ and the nonzero integers $i$, so that, if the sequence is freely traversed, each patch size is eventually encountered.

\emph{3. Color.--- Type-$t$ color the wrapped patch $\tilde P_{n,m}$ to obtain the edge colored and wrapped patch $\tilde C$. If a type-$t$ coloring of $\tilde P_{n,m}$ is not found, choose a new patch size $(n,m)$ and return to stage~1.}

To type-$t$ color $\tilde P_{n,m}$ means to properly edge color $\tilde P_{n,m}$ using either any number of colors ($t=3$), $\Delta (\tilde P_{n,m})+1$ colors ($t=2$), or $\Delta (\tilde P_{n,m})$ colors ($t=1$). In \sec{analysis} and \app{proofs} (\lem{max_degree}), we prove that $\Delta (\tilde P_{n,m})=\Delta(\mc G)$ if $\tilde P_{n,m}$ is self-loop-free, which is guaranteed by stage~2 of the current method.

Assuming that a type-$t$ coloring $\tilde C = \tilde C(\tilde P_{n,m})$ is found, we may write its vertices and edges as [cf. \eq{edge_coloring}]
\begin{equation}
  \begin{aligned}
    V[\tilde C(\tilde P_{n,m})]&=V(\tilde P_{n,m}),\\
    E[\tilde C(\tilde P_{n,m})]&=\{(\{w(v),w(v')\},\{v,v'\},c_{v,v'}) \\
    \ \ &\mid (\{w(v),w(v')\},\{v,v'\})\in E(\tilde P_{n,m})\}.
  \end{aligned}
\end{equation}
The crucial point for the subsequent stages of the method is that, in the above, we may write $c_{v,v'}$ because an edge $(\{w(v),w(v')\},\{v,v'\})\in\tilde E(P_{n,m})$ is uniquely identified by $\{v,v'\}$.

Any method may be used to seek $\tilde C$. In our implementation, we use the satisfiability modulo theories (SMT) solver Z3~\cite{moura2008z3,githubz3}. The solver is guaranteed to find a type-$t$ coloring if such a coloring exists and reports that no such coloring exists otherwise. In case a proper edge coloring is sought $(t=3)$, for every pair of edges incident on the same vertex the constraint is added to the SMT formula that states that the colors of those edges cannot be equal. If a type-II coloring is sought ($t=2$), an additional constraint is that the total number of colors used is $\Delta (\tilde P_{n,m})+1$. If a type-I coloring is sought for ($t=1$), instead the additional constraint is that the total number of colors used is $\Delta (\tilde P_{n,m})$.

A proper edge coloring exists for any $\tilde P_{n,m}$. By Vizing's theorem, a type-II coloring is guaranteed to exist if $\tilde P_{n,m}$ is simple. In \sec{analysis} and \app{proofs} (\lem{simple}), we prove that a simple $\tilde P_{n,m}$ is always encountered as the sequence of patch sizes $a$ is traversed. Nonetheless, if $\tilde P_{n,m}$  contains a multi-edge, a type-II coloring may still exist, in which case it will be found by the SMT solver. It is an open question whether a wrapped patch that permits a type-I coloring will be encountered as the sequence of patch sizes $a$ is traversed, even if it is assumed that $\mc G$ is class-I.

\emph{4. Unwrap.---Unwrap $\tilde C$ to obtain the edge-colored patch $C$.}

Unwrapping an edge reinstates the old edge, carrying with it the color that was added in stage~2,
\begin{equation}
 U(\{w(v),w(v')\},\{v,v'\},c_{v,v'})=(\{v,v'\},c_{v,v'})
\end{equation}
It naturally induces a map on graphs of the form of $\tilde C$ by acting on all edges of $\tilde C$. If applied to $\tilde C$, $U$ produces an edge coloring $C$ of $P_{n,m}$. Thus, we have
\begin{equation}
 U(\tilde C)=C.
\end{equation}
To summarize the output thus far,  $C=U(\tilde C\{W[P_{n,m}(B)]\})$ at given $(n,m)$. Note that $C$ is an edge coloring of $P_{n,m}$. Return $C$, $n$ and $m$.  \hfill \qedsymbol
\end{method}

In \sec{analysis} and \app{proofs} (\thm{correctness}), we show that $C$ induces a type-$t$ coloring of the lattice graph $\mc G$. Akin to \eq{patch}, as a postprocessing stage, edge colored `super' patches of $N$ by $M$ edge coloring basis graphs $C$ can be obtained with \eq{edge_colored_lattice_graph} after substituting $\mathbb Z^2\to \mathbb Z_N\times \mathbb Z_M$. Such a construction is used to construct edge colored super patches in \sec{explicit_minimal_edge_colorings}.

We now motivate stage~2 (`no self-loops') of the method by example, postponing the general and careful analysis to \sec{analysis} and \app{proofs}. Suppose the input of the method is formed by $B,t$, with $t=3$ and $B$ a basis graph of the square lattice graph that contains a single seed. This basis graph has vertices $V(B)=\{\mrm a,\mrm b,\mrm c\}$, with $\mrm a=(0,0,0)$, $\mrm b=(1,0,0)$, and $\mrm c=(0,1,0)$, and edges $E(B)=\mathlarger{\{}\{\mrm a,\mrm b\},\{\mrm a,\mrm c\}\mathlarger{\}}$. Then, stage~1 (`wrap') produces the wrapped basis graph $\tilde B$. This graph has two edges, both of which are self-loops of the seed $\mrm a$. Now suppose stage~2 (`no self-loops') is skipped, so that $\tilde B$ is passed to stage~3 (`color') despite containing a self-loop. $\tilde B$ can only be properly edge colored using two colors. Stage 4 (`unwrap') now produces a graph $C$ with $E(C)=\{(\{\mrm a,\mrm b\},c_{\mrm a,\mrm b}),(\{\mrm a,\mrm c\},c_{\mrm a,\mrm c})\}$ for some $c_{\mrm a,\mrm b}\neq c_{\mrm a,\mrm c}$. Because the square lattice graph $\mc G$ has maximum degree four, $C$ cannot induce a proper edge coloring of $\mc G$. When, instead, stage~2 (`no self-loops') is included, the input to that stage, $\tilde P_{n,m}$, is rejected until we reach a patch with $(n,m)=(2,2)$. It is then properly edge colored and unwrapped. The resulting edge colored patch induces a proper edge coloring of the lattice $\mc G$ (\fig{archimedean}).

\section{Explicit minimal edge colorings}\label{sec:explicit_minimal_edge_colorings}
In this section, we introduce various collections of lattice graphs and apply \met{edge_coloring} to find a type-I and hence minimal edge coloring for each. Images of all 1318 edge colored lattice graphs are available at \cite{Note2}. \met{edge_coloring}, including pre- and postprocessing, is implemented in \textsc{Python} and is also available at \footnote{See Supplemental Material at [URL will be inserted by publisher] for all code and data related to this paper. These are also available at at \url{https://github.com/kattemolle/ecolpy}}. The computationally most demanding task of searching type-I colorings of candidate wrapped patches is offloaded to Z3. Results and wall-clock times are obtained by running the code on a 2019 MacBook Pro (16 inch, 2,6 GHz 6-Core Intel Core i7, 16 GB RAM) using a single thread.

\subsection{Archimedean and Laves tilings}\label{sec:archimedean_and_laves}
A tiling of the plane is a covering of the plane by closed sets (tiles) without gaps or overlaps of nonzero area~\cite{grunbaum2016tilings}. The vertices and edges of the tiles, or the `mesh' of a tiling, naturally define an infinite planar graph. A tiling is termed periodic if it exhibits translational symmetry in at least two nonparallel directions~\cite{grunbaum2016tilings}. Thus, a periodic tiling naturally defines a lattice graph.

The Archimedean tilings are tilings of the plane by convex regular polygons with the property that all vertex figures have the same form. That is, all vertices, including the surrounding geometrical edges, are congruent~\cite{grunbaum2016tilings} figures. There are exactly 11 Archimedean lattices, as was first discovered by Kepler in his Harmonice Mundi from 1619~\cite{kepler1619harmonices, aiton1997harmony}. The Archimedean tilings are periodic and hence \met{edge_coloring} can be used to color their edges. Minimal edge colorings of all Archimedean tilings were found in 0.3\,s and are shown in \fig{archimedean}.

The Laves tilings use a single shape of tile (monohedral), where, for each vertex separately, all angles between two circularly adjacent edges are equal (vertex regular)~\cite{grunbaum2016tilings}. These tilings are also the face-dual tilings of the Archimedean tilings. Consequently, there are 11 Laves tilings. The square lattice is self-dual, and the honeycomb and triangular tilings are mutually self-dual, leaving  8 Laves tilings that are not Archimedean. Minimal edge colorings of these 8 tilings were found in 0.4\,s and are shown in \fig{laves}.

\newcommand{\lf}[1]{\begin{minipage}{0.25\textwidth}\noindent \includegraphics[scale=.15]{#1.pdf}\end{minipage}}
\newcommand{\nm}[1]{\begin{minipage}{0.25\textwidth}\noindent #1 \end{minipage}}

\newcommand{\lff}[1]{\begin{minipage}{0.25\textwidth}\noindent \includegraphics[scale=.15]{#1.pdf}\end{minipage}}
\newcommand{\nmm}[1]{\begin{minipage}{0.25\textwidth}\noindent #1 \end{minipage}}

\begin{figure*}
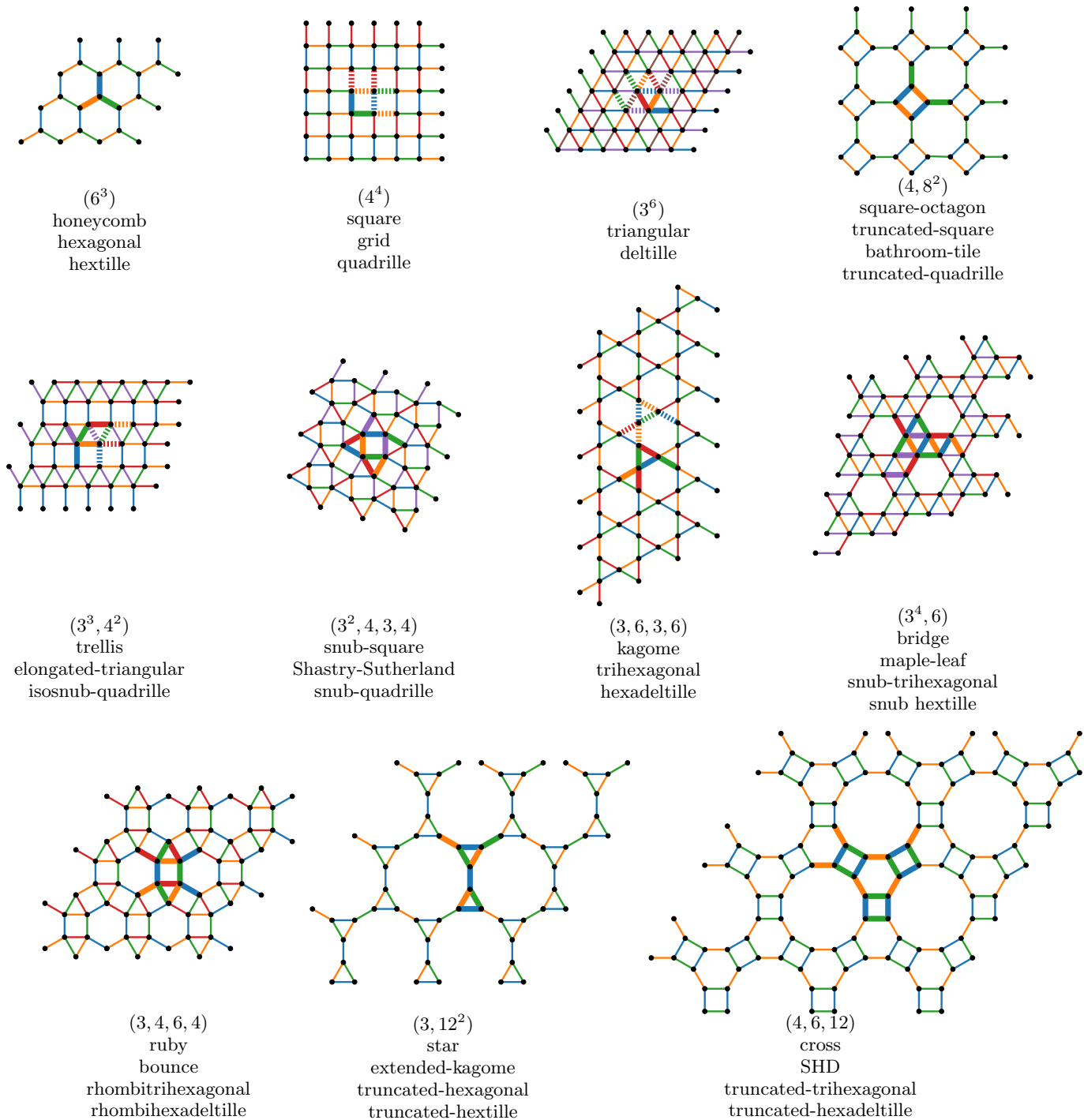

  \begin{tabular}{llll} \lf{t1001}&\lf{t1005}&\lf{t1011}&\lf{t1002}
    \\
    \nm{$(6^3)$\\honeycomb\\hexagonal\\hextille}&
    \nm{$(4^4)$\\square\\grid\\quadrille}&
    \nm{$(3^6)$\\triangular\\deltille} & \nm{$(4,8^2)$\\square-octagon\\truncated-square\\bathroom-tile\\truncated-quadrille}
  \end{tabular}
  \begin{tabular}{cccc}
    \lf{t1008}&\lf{t1009}&\lf{t1007}&\lf{t1010}
    \\ \nm{$(3^3,4^2)$\\trellis\\elongated-triangular\\isosnub-quadrille}& \nm{$(3^2,4,3,4)$\\snub-square\\Shastry-Sutherland\\snub-quadrille}&
    \nm{$(3,6,3,6)$\\kagome\\ trihexagonal\\ hexadeltille}&
    \nm{$(3^4,6)$\\ bridge\\ maple-leaf\\ snub-trihexagonal\\ snub hextille}
  \end{tabular}

  \vspace{1em}
  \begin{tabular}{lll}
    \lff{t1006}&
    \lff{t1004} \hspace{3em} &
    \lff{t1003}
    \\
    \nmm{$(3,4,6,4)$\\ ruby\\ bounce\\ rhombitrihexagonal\\ rhombihexadeltille}&
    \nmm{$(3,12^2)$\\ star\\ extended-kagome\\ truncated-hexagonal\\ truncated-hextille}&
    \hspace{2em}\nmm{$(4,6,12)$\\ cross\\ SHD\\ truncated-trihexagonal\\ truncated-hexadeltille}
  \end{tabular}
\caption{\label{fig:archimedean} Minimal edge colorings of the 11 Archimedean lattice graphs. The coloring basis graph, that is, the coloring pattern that needs to be repeated to color the entire lattice graph, is depicted with bold lines, both solid and dashed. It always contains the basis graph of the uncolored lattice graph (bold, solid lines only). Below the lattice graph follow various names of that graph. The first name is in the notation of Gr\"unbaum and Shephard~\cite{grunbaum2016tilings}. Starting at any vertex, it lists the number of vertices of each of the adjacent polygons in cyclic order. Any repetitions are abbreviated by a superscript, e.g., $(6^3)=(6,6,6)$. This name is followed by various other (possibly nonsystematic) names, whereas the last name is the one given by Conway et al.~\cite{conway2008symmetries}.}
\end{figure*}

\begin{figure*}
  \begin{tabular}{cccc}
    \lf{dualsquareoctagon}&
    \lf{dualtrellis}&
    \lf{dualsnubsquare}&
    \lf{dualkagome}
    \\
    \nm{$[4,8^2]$\\ union jack}&
    \nm{$[3^3,4^2]$\\ prismatic-pentagonal\\ iso(4-)pentille}&
    \nm{$[3^2,4,3,4]$\\ cairo pentagonal\\ 4-fold-pentille}&
    \nm{$[3,6,3,6]$\\ dice\\ rhombille}
  \end{tabular}
  \begin{tabular}{cccc} \lf{dualruby}&\lf{dualstar}&\hspace{2em}\lf{dualbridge}&\lf{dualcross}
    \\
    \nm{$[3,4,6,4]$\\ Deltoidal-trihexagonal\\ tetrille}&
    \nm{$[3,12^2]$\\ asanoha\\ hemp-leaf\\ Triakis-triangular\\ kisdeltille}&\hspace{2em}
    \nm{$[3^4,6]$\\ floret pentagonal\\ 6-fold pentille}&
    \nm{$[4,6,12]$\\ kisrhombille}
  \end{tabular}
  \caption{\label{fig:laves} Minimal edge colorings of all Laves tilings that are not already in \fig{archimedean}. The data are presented as in \fig{archimedean}, except for the first name, which is now the name of the Laves lattice by Gr\"unbaum and Shephard~\cite{grunbaum2016tilings}. Taking any tile, it lists (in square brackets) the degree of the tile's vertices in the lattice. The dual of $(x)$ is $[x]$.}
\end{figure*}

\begin{figure*} \includegraphics[width=.6\textwidth]{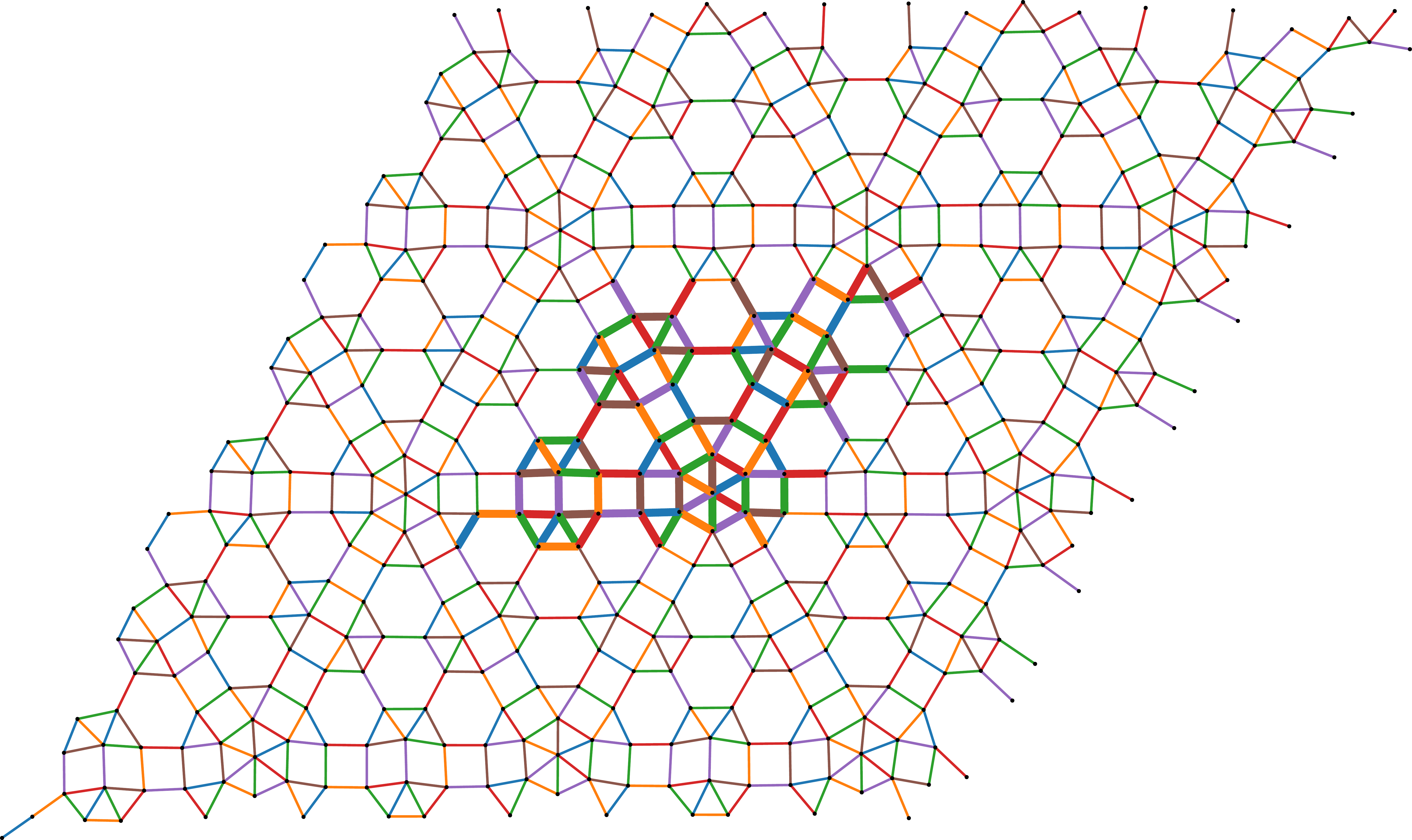}\\ \caption{\label{fig:6_uniform} Minimal edge coloring of a 6-uniform geometrical lattice graph from Refs.~\cite{galebach2023n_uniform,galebach_json}.}
\end{figure*}

\begin{figure*}
\begin{tabular}{cc}
     \lf{diagonal}& \hspace{4em}
     \lf{NNN-square}
     \\
     \nm{NN-square\\dense grid \\ $J_1J_2$-square \\ diagonal}&\hspace{4em}
     \nm{NNN-square\\ $J_1J_2J_3$-square}
\end{tabular}

\vspace{2em}
\begin{tabular}{ccc}
     \lf{shuriken}&
     \lf{heavy-hex}&\hspace{2em}
     \lf{wheel_decorated_honeycomb}
     \\
     \nm{shuriken\\square-kagome\\squagome}&
     \nm{heavy-hex\\heavy-hexagon}&\hspace{2em}
     \nm{\ wheel-decorated honeycomb}
\end{tabular}
\caption{\label{fig:miscellaneous}
  Miscellaneous minimal edge colorings. In the case of the NNN-square lattice graph, to enhance legibility and unlike other lattice graph depictions, a single the coloring basis graph is presented, with all edges solid. The uncolored lattice is induced by the bottom-left node and its neighbors. A gray background was added to the wheel-decorated honeycomb lattice graph to accentuate its unique status as the only class~II lattice graph within this paper.}
\end{figure*}

\subsection{$k$-uniform lattices}
A tiling by regular polygons is called uniform if every vertex of the tiling can be mapped to any other vertex of the tiling by a lattice symmetry~\cite{grunbaum2016tilings}. Notice that this mapping involves lattice symmetries rather than isometries. Nevertheless, the set of uniform tilings of the plane by regular polygons is identical to that of the Archimedean lattices~\cite{conway2008symmetries}.

The notion of uniformity generalizes to $k$-uniformity. A tiling by regular polygons is called $k$-uniform if the vertices fall into $k$ classes, where any two vertices within a class can be mapped to one another by a symmetry of the tiling~\cite{grunbaum2016tilings, krotenheerdt1968homogenen}. In this terminology, the 11 Archimedean (or uniform) tilings are 1-uniform. There are 20 2-uniform tilings of the plane~\cite{krotenheerdt1968homogenen} and 61 3-uniform tilings~\cite{chavey1989tilings}.

For $k\leq 6$, Galebach~\cite{galebach2023n_uniform} performed an exhaustive search for all $k$-uniform tilings using a computer program. For $k=1,\ldots,6$, he found the number of $k$-uniform tilings to be 11, 20, 61, 151, 332, 673, reproducing previous results for $k\leq 3$. Although no scientific description of Galebach's method exists, his list is considered exhaustive by some authors~\cite{oeisA06859,weisstein2023uniform}. Čtrnáct~\cite{ctrnact2023catalog} performed an exhaustive search for $k\leq 12$, reproducing Galebach's results for $k\leq 6$. Unfortunately, also for Čtrnáct's catalog, a scientific description of the used methods is lacking. A scientific description of a method for enumerating all $k$-uniform edge-to-edge tilings by regular polygons is introduced in Ref.~\cite{gomez2021gomjau}, but this method has not yet been used to generate tilings for $k\geq 3$.

In Refs.~\cite{sanchez2019acquiring,sanchez2021integer,medeiros2018synthesizing}, Galebach's catalog of $k$-uniform tilings was converted to a notation where each tiling is represented by its seeds, where every seed is a point in the complex plane. The catalog in the latter form was obtained at \cite{galebach_json} and is repeated in the Supplemental Material \cite{Note2} for persistency. We automatically converted all entries in the catalog to the basis graph notation.

We found a type-I and hence minimal edge-coloring for all 1248 $k$-uniform ($k\leq 6$) tilings in the catalog in 10 minutes. Images of these edge colorings are available as Supplemental Material~\cite{Note2}. An example minimal edge-coloring of a 6-uniform tiling is shown in \fig{6_uniform}.

\subsection{Miscellaneous lattice graphs}
For the applications in \sec{previous_work_and_applications}, miscellaneous lattice graphs are of interest that are not among the Archimedean, Laves or $k$-uniform lattices. We consider the square lattice with added edges to the geometrically next-nearest neighbors (NN-square) and the NN-square lattice graph with added edges to geometrically next-next-nearest neighbors (NNN-square). As the kagome lattice graph, the shuriken lattice graph consists of corner-sharing triangles. The heavy-hex lattice graph is obtained from the hexagonal lattice graph by making the edges `heavy', that is, by adding a vertex on top of every edge of the hexagonal lattice graph. Type-I colorings of these four graphs are displayed in \fig{miscellaneous}. The wheel-decorated honeycomb lattice is provably class~II (\sec{analysis}). Figure \ref{fig:miscellaneous} also shows a type-II (and hence minimal) edge coloring of this graph. The five edge colorings in \fig{miscellaneous} were found in 25.6\,s, of which the NNN-square lattice graph is the most challenging, taking 25.5\,s.

In addition to the lattice graphs in the Galebach collection,  Refs.~\cite{sanchez2019acquiring,sanchez2021integer,medeiros2018synthesizing} give lattice graphs in the Sa \& Sa collection~\cite{sa2017sobre}, which we obtained at Ref. \cite{saesa_json}. These graphs were created for artistic purposes and do not intend to exhaust any class of graphs. However, the application of \met{edge_coloring} to this extra collection further establishes the practical applicability of the edge coloring method. A type-I and hence minimal edge coloring was found for all 57 lattice graphs in the Sa \& Sa collection that were not already in the Galebach collection, taking 24\,s. The results are not depicted in this paper, but are available in the Supplemental Material~\cite{Note2}.

\section{Analysis}\label{sec:analysis}

We show that the self-loop-free property of a wrapped patch of a lattice graph $\mc G$ is necessary and sufficient for this patch to induce a proper edge coloring of $\mc G$ after it is properly edge colored and unwrapped (\thm{proper_edge_coloring}). This is used to show that \met{edge_coloring} halts and that it is correct if we restrict the input to $t=3$ (i.e., `find a proper edge coloring') or $t=2$ (i.e., `find a type-II coloring'). Correctness is also shown if $t=1$ (i.e., `find a type-I coloring'), provided that the method halts (\thm{correctness}). We analyze the method's running time in the cases where it is guaranteed to halt as formalized in \alg{edge_coloring_algorithm}. We prove by construction that not all lattice graphs, and not even all planar lattice graphs, are class~I (\pro{class-II}). This shows that there are problem instances where \met{edge_coloring} does not halt in case a type-I coloring is requested. Finally, we discuss the relation of the method to the vertex coloring of lattice graphs. Accompanying lemmas and proofs are deferred to \app{proofs}.

\subsection{Correctness}\label{sec:correctness}
In the following theorem and lemma, let the basis graph $B$ generate the lattice graph $\mc G$. Let $P_{n,m}$ be a patch of $\mc G$ of $n$ by $m$ basis graphs, $\tilde P_{n,m}=W(P_{n,m})$ the wrapped patch, and $\tilde C$ a proper edge coloring of the wrapped patch. Let $C=U(\tilde C)$ be the unwrapped $\tilde C$. Let $\mc C$ be the edge colored lattice graph induced by $C$. It forms an edge coloring of $\mc G$. The following theorem shows how \met{edge_coloring} hinges on stage~2 (`no self-loops').

\begin{theorem}\label{thm:proper_edge_coloring}
  The edge colored lattice graph $\mc C$ is properly edge colored if and only if the wrapped patch $\tilde P_{n,m}$ is self-loop-free.
\end{theorem}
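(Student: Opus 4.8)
The plan is to prove both directions by tracking exactly when two edges of $\mc C$ that are incident on a common vertex receive the same color, and relating this to the structure of $\tilde P_{n,m}$. The central observation is that the coloring $\mc C$ of $\mc G$ is built by tiling translated copies of the unwrapped colored patch $C = U(\tilde C)$, and that the wrapping map $W$ identifies two vertices $v,v'\in V(P_{n,m})$ precisely when $w(v)=w(v')$, i.e.\ when they are equal modulo the patch periods $(n,m)$. Thus an edge of $\tilde P_{n,m}$ is a self-loop exactly when an edge $\{v,v'\}\in E(P_{n,m})$ has $w(v)=w(v')$, meaning the two endpoints occupy the same position within the fundamental domain and differ by a lattice translation of the superlattice generated by $(n\bv_1, m\bv_2)$.

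For the \emph{sufficiency} direction (self-loop-free $\Rightarrow$ $\mc C$ proper), I would argue as follows. Take any vertex $u\in V(\mc C)$ and any two distinct edges $e_1,e_2\in E(\mc C)$ incident on $u$. Since $\mc C$ is periodic under the superlattice translations, I may translate so that $u$ lies in the base copy of $C$, say $u = w(\hat u)$ for a representative $\hat u \in V(P_{n,m})$. The key step is to pull $e_1,e_2$ back through the unwrapping: each edge of $\mc C$ incident on $u$ corresponds, after translating into the base cell, to a \emph{distinct} edge of $\tilde P_{n,m}$ incident on the vertex $w(\hat u)$ in the wrapped patch. This correspondence is a bijection between the edges of $\mc C$ at $u$ and the edges of $\tilde P_{n,m}$ at $w(\hat u)$, and crucially it preserves colors, because the color of an edge of $\mc C$ is inherited (via $U$ and the periodic tiling) directly from the color assigned by $\tilde C$ to the corresponding wrapped edge. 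Since $\tilde C$ is a proper edge coloring of $\tilde P_{n,m}$, the two corresponding wrapped edges receive distinct colors, and hence $e_1,e_2$ do too. The self-loop-free hypothesis is what guarantees this map is well-defined as claimed: it ensures that no edge of $E(P_{n,m})$ collapses onto a single wrapped vertex, so that every wrapped edge genuinely has two distinct endpoints and the degree of each wrapped vertex equals the degree of the corresponding lattice vertex. This last point is exactly the content of the degree-matching lemma ($\Delta(\tilde P_{n,m})=\Delta(\mc G)$ when $\tilde P_{n,m}$ is self-loop-free) referenced in the method, and I would invoke it here to rule out any missing incident edges.

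For the \emph{necessity} direction (a self-loop $\Rightarrow$ $\mc C$ not proper), I would argue by contraposition and exhibit a monochromatic conflict explicitly. Suppose $\tilde P_{n,m}$ contains a self-loop, arising from an edge $\{v,v'\}\in E(P_{n,m})$ with $w(v)=w(v')$. Unwrapping produces the colored edge $(\{v,v'\}, c_{v,v'})$ in $C$, and its two endpoints $v,v'$ are identified in the wrapped picture but distinct in $P_{n,m}$, differing by a superlattice translation $(n a, m b, 0)$ for some $(a,b)\neq(0,0)$. When $C$ is tiled to form $\mc C$, the vertex $w(v)=w(v')$ becomes a genuine vertex $u$ of $\mc G$ on which \emph{two} translated copies of this same edge are incident: the copy from the base tile and the copy from the adjacent tile displaced by $(na,mb,0)$. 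Both copies carry the identical color $c_{v,v'}$, so $u$ has two distinct incident edges of the same color, contradicting properness. This shows $\mc C$ cannot be properly colored whenever a self-loop is present, completing the contrapositive.

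The main obstacle I anticipate is the bookkeeping in the sufficiency direction, specifically making the edge-correspondence rigorous: one must verify that translating an arbitrary edge of $\mc C$ incident on $u$ into the base tile lands it on a \emph{unique} wrapped edge and that distinct incident edges of $\mc C$ never map to the same wrapped edge. Injectivity of this correspondence is precisely where the self-loop-free hypothesis (and the induced degree equality) does the real work, so I would isolate that verification as the technical heart of the argument; the necessity direction is comparatively direct once the collapsing edge is identified.
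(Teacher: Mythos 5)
Your proof is correct, but it takes a genuinely different route from the paper's. The paper first reduces to the case $n=m=1$ by a ``reseeding'' isomorphism that turns $P_{n,m}$ into a new basis graph, and then argues \emph{globally}: it shows (via two auxiliary lemmas) that each seed number appears at most once in each color class of $C$, so that translation-invariance of colors and seed numbers forces every color class of $\mc C$ to be a matching. You instead argue \emph{locally} at an arbitrary vertex $u$ of $\mc C$, exhibiting a color-preserving injection from the edges of $\mc C$ incident on $u$ to the edges of $\tilde C$ incident on $w(u)$, and you work with general $(n,m)$ directly, avoiding reseeding. Both are valid; the paper's version buys a cleaner statement (color classes are matchings) at the cost of the reseeding machinery, while yours is more elementary and self-contained but concentrates all the difficulty in the injectivity verification that you defer. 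That deferred step does go through, though your stated reason for it (``no edge collapses onto a single wrapped vertex'') is the definition of self-loop-freeness rather than the mechanism: the actual argument is that if two \emph{distinct} superlattice translates of the same colored patch edge $\{v,v'\}$ were both incident on $u$, one would meet $u$ at a translate of $v$ and the other at a translate of $v'$, forcing $w(v)=w(v')$ and hence a self-loop; distinct patch edges map to distinct wrapped edges automatically because $W$ retains $\{v,v'\}$ as a label. Your invocation of the degree lemma $\Delta(\tilde P_{n,m})=\Delta(\mc G)$ is unnecessary for sufficiency (surjectivity of your correspondence is never needed), and the paper indeed proves that lemma separately and uses it only to bound the number of colors, not to establish properness. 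Your necessity direction matches the paper's essentially verbatim, up to the harmless imprecision that the shared vertex is $v$ or $v'$ itself rather than the wrapped representative $w(v)$.
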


Any graph can be properly edge colored and this coloring may use any number of colors. Thus, to show that \met{edge_coloring} halts and produces the correct output in case a proper edge coloring is requested, by \thm{proper_edge_coloring}, it suffices to show that in \met{edge_coloring}, a self-loop-free patch is eventually encountered. We do this in \lem{self-loop-free} of \app{proofs}.

For a type-II coloring of a lattice graph $\mc G$, at most $\Delta(\mc G)$ colors may be used. The degree of the lattice graph may be obtained by computing the degree of a wrapped patch, provided that that patch is self-loop free.
\begin{lemma}\label{thm:max_degree}
   If the wrapped patch $\tilde P_{n,m}$ is self-loop-free, then $\Delta(\tilde P_{n,m})=\Delta(\mc G)$.
\end{lemma}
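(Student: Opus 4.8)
The plan is to reduce the equality of maximum degrees to a seed-by-seed equality of individual vertex degrees, and then to count those degrees directly as a sum over the edges of the basis graph $B$. First I would exploit translational symmetry: by \eq{lattice_graph}, translating the home cell by $(x,y,0)$ is an automorphism of $\mc G$, and translation of the torus cells by any fixed $(a,b)$, reduced modulo $(n,m)$, is an automorphism of $\tilde P_{n,m}$. Hence in either graph the degree of a vertex depends only on its seed number $s$. Vertices of $V(P_{n,m})$ that are nonseed copies with cell coordinates outside $\mathbb Z_n\times\mathbb Z_m$ are isolated in $\tilde P_{n,m}$ and carry degree $0$, so the maximum degree is attained on some vertex $(0,0,s)$. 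It therefore suffices to show $\deg_{\tilde P_{n,m}}\!\big((0,0,s)\big)=\deg_{\mc G}\!\big((0,0,s)\big)$ for every $s\in S$, and then take the maximum over $s$.

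Second, I would express both degrees as one and the same sum over $E(B)$. Fix $v=(0,0,s)$ and write a basis-graph edge as $\beta=\{(p_1,q_1,c_1),(p_2,q_2,c_2)\}$; since $B$ is simple and has no nonseed--nonseed edges, $\beta$ has a seed endpoint and $(p_1,q_1)\neq(p_2,q_2)$. In $\mc G$, an edge incident to $v$ is exactly a translate of some $\beta$ carrying an endpoint with $c_i=s$ onto $v$, realized by the unique translate $(-p_i,-q_i)\in\mathbb Z^2$; because $\mc G$ is simple and $B$ has no redundant edges, distinct such incidences are distinct edges, so $\deg_{\mc G}(v)=\sum_{\beta\in E(B)}\#\{\,i\in\{1,2\}: c_i=s\,\}$. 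Running the same count in $\tilde P_{n,m}$, each endpoint with $c_i=s$ is carried onto $v$ by the unique patch translate $(-p_i \mod n,\,-q_i \mod m)\in \mathbb Z_n\times\mathbb Z_m$, and the edge labels $\{v,v'\}$ ensure that distinct (edge, translate) pairs index distinct edges of the multigraph.

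The crux---and where I expect the main obstacle to lie---is to show that wrapping preserves each term $\#\{i:c_i=s\}$ intact, which is precisely the content of the self-loop-free hypothesis. The only way two lattice incidences of a single edge $\beta$ can fail to survive as two distinct wrapped edges is when both endpoints have seed number $s$ (so $c_1=c_2=s$) and the two translates $(-p_1,-q_1)$ and $(-p_2,-q_2)$ coincide modulo $(n,m)$, i.e. $p_1\equiv p_2 \pmod n$ and $q_1\equiv q_2\pmod m$; in that single case both endpoints wrap onto $v$ simultaneously and the edge becomes a self-loop at $v$. Assuming $\tilde P_{n,m}$ is self-loop-free rules this out, so every incidence counted in $\mc G$ reappears as a distinct non-loop edge of $\tilde P_{n,m}$ at $v$, giving $\deg_{\tilde P_{n,m}}(v)=\sum_{\beta}\#\{i:c_i=s\}=\deg_{\mc G}(v)$. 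The careful bookkeeping---verifying that distinct translates yield distinct labeled multi-edges, and that a self-loop at $v$ is the unique mechanism collapsing two incidences into one---is the step demanding the most attention; the rest is routine.
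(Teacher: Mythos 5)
Your proof is correct and rests on the same core observation as the paper's: the degree of a vertex in either graph is determined by its seed number alone, and self-loop-freeness is precisely the condition that prevents a basis edge whose two endpoints both carry seed number $s$ from collapsing two incidences at $(0,0,s)$ into one. The paper executes this by reseeding to reduce to $n=m=1$ and normalizing $B$ so that every edge touching seed $s$ has its $s$-endpoint in the home cell, then chaining $\deg_{\tilde B}(v)=\deg_{B}(v)=\deg_{\mc G}(v)$, whereas you count incidences directly as $\sum_{\beta\in E(B)}\#\{i:c_i=s\}$ for general $(n,m)$ --- a more explicit but equivalent piece of bookkeeping, with the no-redundant-edges and simplicity assumptions doing the same work in both versions.
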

By Vizing's theorem, every finite simple graph can be type-II colored. Also note that for a colored and wrapped patch, unwrapping does not change the number of colors. Thus, if a wrapped patch is encountered by \met{edge_coloring} that is simple, it can be type-II colored and unwrapped, after which \thm{proper_edge_coloring} and \lem{max_degree} guarantee that this patch induces a type-II coloring of the lattice graph. Therefore, to show that \met{edge_coloring} halts and that it is correct in case a type-II coloring is requested, it only remains to be shown that a simple patch is eventually encountered by \met{edge_coloring}, which we do in \lem{simple} of \app{proofs}.

Finally, in case a type-I coloring is requested,  \thm{proper_edge_coloring} and \lem{max_degree} also provide that if a self-loop-free wrapped patch is encountered that allows a type-I coloring, then this patch induces a type-I coloring of the lattice graph after type-I coloring and unwrapping. It is, however, no longer guaranteed that such a patch is encountered.

Let $B,t$ be the input of \met{edge_coloring}. The above results are summarized in the following theorem (see \app{proofs}).
\begin{theorem}\label{thm:correctness}
If $t=2$ or $t=3$, \met{edge_coloring} terminates and the output $C,(n,m)$ induces a type-$t$ coloring of $\mc G$. If $t=1$ and \met{edge_coloring} terminates, the output induces a type-I coloring of $\mc G$.
\end{theorem}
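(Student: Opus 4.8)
The plan is to split the statement into a \emph{correctness} part and a \emph{termination} part and to assemble each from results already established: correctness from \thm{proper_edge_coloring} together with \lem{max_degree}, and termination (for $t=2,3$) from the halting lemmas \lem{self-loop-free} and \lem{simple} of \app{proofs}. The observation that ties everything together is that whenever \met{edge_coloring} returns output $C,(n,m)$, the wrapped patch $\tilde P_{n,m}$ must have cleared stage~2 and is therefore self-loop-free, while $\tilde C$ is a genuine type-$t$ coloring of $\tilde P_{n,m}$ delivered by stage~3.

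I would first dispatch correctness for all three values of $t$ at once. Self-loop-freeness of $\tilde P_{n,m}$ lets \thm{proper_edge_coloring} conclude that the induced $\mc C$ is a proper edge coloring of $\mc G$, which already settles $t=3$. For $t\in\{1,2\}$ it remains to count colors. Unwrapping (stage~4) leaves every edge's color untouched, and building $\mc C$ by translation reproduces exactly the palette of $C$, so $\mc C$ and $\tilde C$ use the same number of colors. Here \lem{max_degree} supplies $\Delta(\tilde P_{n,m})=\Delta(\mc G)$---the self-loop-free hypothesis being exactly what makes this available---so a type-I (type-II) coloring of $\tilde P_{n,m}$, which uses $\Delta(\tilde P_{n,m})$ (respectively $\Delta(\tilde P_{n,m})+1$) colors, becomes a proper coloring of $\mc G$ in $\Delta(\mc G)$ (respectively $\Delta(\mc G)+1$) colors, i.e.\ a type-$t$ coloring of $\mc G$. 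This establishes correctness in every case, including the conditional $t=1$ clause.

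Next I would prove termination for $t=3$ and $t=2$. For $t=3$, stage~3 can never fail because every finite graph admits a proper edge coloring, so the loop could only stall by rejecting patches at stage~2 forever; \lem{self-loop-free} excludes this by guaranteeing that a self-loop-free patch appears as the size sequence $a$ is traversed. For $t=2$, stage~3 may also reject a patch, but any \emph{simple} patch is at once self-loop-free and, by Vizing's theorem, type-II colorable; \lem{simple} guarantees such a patch is eventually reached, so the loop exits and, by the correctness paragraph, returns a type-II coloring of $\mc G$.

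The genuinely hard content of the proof has been delegated to the halting lemmas \lem{self-loop-free} and \lem{simple} and to the characterization \thm{proper_edge_coloring}, all of which I am invoking as given; within the present assembly the one point demanding care is the transfer of color counts from the finite patch to the infinite lattice, which rests entirely on \lem{max_degree} bridging $\Delta(\tilde P_{n,m})$ and $\Delta(\mc G)$ and hence on the stage-2 guarantee of self-loop-freeness. This same bridge is why no termination claim is possible for $t=1$: correctness survives unchanged, but halting would require a self-loop-free patch that \emph{additionally} admits a type-I coloring, and there is no analogue of \lem{self-loop-free} or \lem{simple} forcing such a patch to occur---even assuming $\mc G$ is class~I, its existence is open. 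This is precisely why the $t=1$ clause is stated conditionally on \met{edge_coloring} terminating.
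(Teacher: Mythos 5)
Your proposal is correct and follows essentially the same route as the paper: correctness for all $t$ via \thm{proper_edge_coloring} plus \lem{max_degree} applied to the self-loop-free patch that cleared stage~2, and termination for $t=2,3$ via the halting lemmas. The only cosmetic difference is that for $t=3$ you invoke \lem{self-loop-free} directly, whereas the paper routes both the $t=2$ and $t=3$ termination arguments through the simple patch of size $(2dx_{\max}+1,2dy_{\max}+1)$ guaranteed by \lem{simple}; your variant is marginally more economical but substantively identical.
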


\subsection{Running time}\label{sec:running_time}
We introduced \met{edge_coloring} as a \emph{method} to allow flexibility in its usage. For a rigorous worst-case running time analysis, we remove this flexibility and consider the following use case of the method. The method then becomes an algorithm in the sense that it is concrete, that it halts, and that it is correct.

The algorithm receives a basis graph $B$ (that induces a lattice graph $\mc G$) as input. For simplicity of the analysis, we assume that $B$ does not have isolated vertices (which play no role in the edge coloring). The algorithm outputs a coloring basis graph $C$ [together with its size $(n,m)$] that induces a type-II edge coloring of $\mc G$. Notice that any type-II coloring is also proper edge coloring, so that the algorithm can also be used when a mere proper edge coloring is desired. For a basis graph $B$, define
\begin{equation}
  \begin{aligned}\label{eq:max}
  dx_{\max}&=\max\{\lvert dx \rvert\mid(dx,dy,s)\in E(B)\},\\ dy_{\max}&=\max\{\lvert dy \rvert\mid(dx,dy,s)\in E(B)\}.
  \end{aligned}
\end{equation}

\begin{algorithm}\label{alg:edge_coloring_algorithm}
  Compute $dx_{\max},dy_{\max}$. Run \met{edge_coloring} with $t=2$ and the initial patch size $(n,m)$ set to $(2dx_{\max}+1,2dy_{\max}+1)$. Use the Misra and Gries edge coloring algorithm to edge color the finite patch in stage~3.
\end{algorithm}

In \app{proofs}, we prove that the algorithm is finite and correct. We also prove running time of \alg{edge_coloring_algorithm} is dominated by the running time of the Misra and Gries edge coloring algorithm, which is $O(\lvert E(P_{n,m}) \rvert^2)=O[(dx_{\max}dy_{\max}\lvert E(B) \rvert)^2]$. Define the maximum distance $D=\max\{dx_{\max},dy_{\max}\}$ and $\mu=\lvert E(B)\rvert$. Note that with this definition, for example, the distance between unit cell $(0,0)$ and $(1,1)$ (see \fig{lattice_graph}) is $D=1$. With these definitions, the worst-case running time of \alg{edge_coloring_algorithm} is $O(D^4\mu^2)$.

\subsection{Class II lattice graphs}\label{sec:class_II_lattice_graphs}

\begin{figure}
  \includegraphics[width=.2\columnwidth]{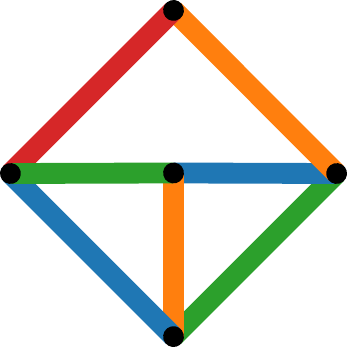}
  \caption{\label{fig:broken_wheel} Minimal edge coloring of the broken wheel graph $W_5'$. This graph is class~II.}
\end{figure}

We now show the existence of lattice graphs that are class~II. We define the broken wheel graph $W_5'$ as the wheel graph $W_5$ with one missing spoke (\fig{broken_wheel}). Beginning by coloring the edges incident on a vertex of degree three, it is straightforward to see that $W_5'$ is class~II, that is, that at least four colors are required to properly edge color~$W_5'$.

We define the \emph{wheel-decorated honeycomb lattice graph} by its basis graph as follows. Take the basis graph of the honeycomb lattice graph (\fig{lattice_graph}), add a vertex $v$ on top of one of the edges, add the wheel graph $W_5$, and redirect one of the spokes of $W_5$ so that it ends on $v$ instead of the hub (the central vertex) of $W_5$. The wheel-decorated honeycomb lattice graph is a simple, 3-regular planar graph and is depicted in \fig{miscellaneous}.

\begin{proposition}\label{thm:class-II}
  The wheel-decorated honeycomb lattice graph is class~II.
\end{proposition}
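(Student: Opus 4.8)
The plan is to show that the wheel-decorated honeycomb lattice graph $\mc G$ admits no proper $3$-edge-coloring. Since $\mc G$ is stated to be a simple, $3$-regular planar graph, $\Delta(\mc G)=3$, and by Gupta's extension of Vizing's theorem to infinite graphs of finite maximum degree we have $\chi'(\mc G)\le \Delta(\mc G)+1=4$. Hence ruling out colorings with $3$ colors forces $\chi'(\mc G)=4=\Delta(\mc G)+1$, i.e.\ class~II. The whole argument is therefore local: exhibit one obstruction that blocks any global $3$-coloring.

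First I would isolate the relevant gadget from the construction. One cell of $\mc G$ contains the five vertices of the added wheel $W_5$ — a hub $h$ and a rim $r_1,r_2,r_3,r_4$ forming a $4$-cycle — in which the spoke $hr_1$ has been deleted and replaced by the redirected edge $r_1v$, where $v$ is the vertex placed on a honeycomb edge. Let $H$ be the subgraph of $\mc G$ induced by $\{h,r_1,r_2,r_3,r_4\}$. Its edges are the rim $r_1r_2,r_2r_3,r_3r_4,r_4r_1$ together with the spokes $hr_2,hr_3,hr_4$; that is, $H$ is precisely the broken wheel $W_5'$, the missing spoke being $hr_1$. The crucial structural observation, which I would verify directly from the definition of the decoration, is that the \emph{only} edge of $\mc G$ leaving $H$ is $r_1v$: the hub $h$ and the rim vertices $r_2,r_3,r_4$ have all three of their incident edges inside $H$, while $r_1$ — the unique degree-$2$ vertex of $W_5'$ — carries the single external edge to $v$.

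Next I would derive the contradiction. Suppose $\mc G$ had a proper $3$-edge-coloring $c$, and let $c|_{E(H)}$ be its restriction to the edges of $H$, assigning colors from $\{1,2,3\}$ to $E(W_5')$. This restriction is a \emph{proper} edge coloring of $W_5'$: any two edges of $H$ sharing an endpoint also share it in $\mc G$, hence receive distinct colors under $c$. The external edge $r_1v$ only imposes an extra constraint at $r_1$ and is simply discarded in the restriction, so it cannot rescue an otherwise impossible coloring. Thus $c|_{E(H)}$ would be a type-I coloring of $W_5'$ (proper, using at most $\Delta(W_5')=3$ colors). But $W_5'$ was shown above to be class~II, so no proper $3$-edge-coloring of $W_5'$ exists. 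This contradiction shows $\mc G$ has no proper $3$-edge-coloring, and therefore $\chi'(\mc G)=4=\Delta(\mc G)+1$.

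The genuinely delicate point is the structural claim that $H$ is an \emph{induced} copy of $W_5'$ whose sole external edge sits at its degree-$2$ vertex; once this is in place, the rest is a one-line inheritance argument. I therefore expect the main work to be careful bookkeeping in the construction — confirming that the decorating wheel attaches to the remainder of the lattice only through $r_1$, so that $h,r_2,r_3,r_4$ are ``saturated'' within $H$ and the restriction of any global $3$-coloring is forced to properly $3$-color the class-II graph $W_5'$.
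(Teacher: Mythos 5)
Your proof is correct and follows essentially the same route as the paper's: both arguments reduce to the facts that the broken wheel $W_5'$ is a subgraph of the wheel-decorated honeycomb lattice graph, that $W_5'$ admits no proper $3$-edge-coloring, and that a proper edge coloring restricts to a proper edge coloring of any subgraph. The only difference is that you spend effort verifying that the copy of $W_5'$ is induced with a single external edge at its degree-$2$ vertex, which is unnecessary --- the restriction argument works for an arbitrary subgraph --- and the paper accordingly dispenses with it.
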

\begin{proof}
  For the wheel-decorated honeycomb lattice graph $\mc G$, $\Delta(\mc G)=3$. At least four colors are required to properly edge color the broken wheel graph $W_5'$. The graph $W_5'$ is a subgraph of $\mc G$. Therefore, at least four colors are required to properly edge color $\mc G$.
\end{proof}

\subsection{Vertex coloring of lattice graphs}
The vertex coloring of a graph $G$ is an assignment of colors to the vertices of $G$. It is called proper if no two vertices of the same color share an edge. The minimum number required to properly color all vertices of a graph $G$ is known as the chromatic number $\chi(G)$. Edge coloring is intimately related to vertex coloring by the line graph $L(G)$ of $G$. The vertices of $L(G)$ are formed by the edges $e$ of $G$. There is an edge $\{e,e'\}$ in $L(G)$ if and only if $e$ and $e'$ share a vertex in $G$. An edge coloring of a graph $G$ is equivalent to the vertex coloring of the line graph $L(G)$ of $G$.

In this sense, \met{edge_coloring} gives a method for the minimal vertex coloring for any graph $H$ that is the line graph of some other graph $G$. If it exists, the inverse line graph $G=L^{-1}(H)$ can be found efficiently \cite{roussopoulos1973max}, so that, given an arbitrary graph $H=L(G)$, no prior knowledge of $G$ is required. Vice versa, any method for the vertex coloring of $L(G)$ gives a method for the edge coloring of $G$. The latter works for every $G$.

It is perhaps possible to alter \met{edge_coloring} and its analysis to obtain a method for the minimal or nearly minimal vertex coloring of general lattice graphs. By the above considerations, such an algorithm would be applicable for both the vertex coloring and the edge coloring of lattice graphs. Here, we briefly discuss such an alteration.

We remind that \met{edge_coloring} crucially relies on a simple and tight lower bound on the number of colors [\eq{vizing}]. In the case of the vertex coloring of general lattice graphs (not only those that are the line graph of another lattice graph), this role could be played by $\chi(G)\geq\max\{\om(G), \lceil\lvert V(G)\rvert / \alpha(G) \rceil\}$ \cite{lewis2016guide}, here $\om(G)$ is the number of vertices in a maximal clique of $G$, and $\alpha(G)$ is the number of vertices in a maximal independent set of $G$. In general, this bound is loose so that it fails to guarantee in general that a vertex coloring is minimal or nearly minimal even if the bound is saturated. However, in some cases there are close upper bounds. For example, according to the four color theorem \cite{appel1989every}, $\chi(G)\leq 4$ for planar graphs $G$ and by Brook's theorem \cite{brooks1941colouring}, $\chi\leq \Delta(G)$ unless $G$ is a complete graph or an odd cycle.

When the upper bounds coincide with the lower bounds, they completely determine $\chi(\mc G)$ for a lattice graph $\mc G$. In those cases, finding a vertex coloring using $\chi(\mc G)$ colors is guaranteed to be a \emph{minimal} vertex coloring. As examples, consider the asanoha lattice graph from \fig{laves}. It contains $K_4$, the complete graph on four vertices, as a subgraph, so that $\om(\mc G)\geq 4$. Since it is also planar, it follows that $\chi(\mc G)=4$. The star lattice graph from \fig{archimedean} contains $K_3$, the triangle graph, as a subgraph, so that $\chi(\mc G)\geq 3$. Since $\Delta(\mc G)=3$, $\chi(\mc G)=3$ by Brook's theorem.

In any case, for the alteration of \met{edge_coloring} to a vertex coloring method, more work is required to show how bounds on the chromatic number of finite (wrapped) patches imply bounds on the chromatic number of infinite lattice graphs. Additionally, more work would be required to translate the theorems and lemmas of the current section into the vertex coloring language, even when restricting the input graph to the vertex coloring method is the line graph of some other graph.

\section{Related work}\label{sec:previous_work_and_applications}
\subsection{Previous work}
The Bravais lattices were first considered by Frankenheim in 1835 \cite{frankenheim1835lehre} and later by Bravais in 1850 \cite{bravais1850memoire}, both within the context of crystallography. (A careful account of the origins of crystallography and Bravais lattices can be found in Ref.~\cite{burckhardt2013symmetrie}.) The concept of (geometric) lattice graphs is firmly established in solid-state physics; see, for example, the classic textbook by Ashcroft and Mermin from 1976 \cite{ashcroft1976solid}, or Hilbert and Cohn-Vossen's book from 1932 \cite{hilbert1932anschauliche} for an early account. Furthermore, there are software implementations for the explicit representation of (geometric) lattice graphs and the generation of finite patches \cite{bauer2011alps,todo2023lattice,dylan2023lattpy}. Nevertheless, we are unaware of a previous concise formal definition of (geometric) lattice graphs [as given in~\eq{lattice_graph}] in modern graph theoretic terms~\cite{harary1969graph}.

The edge coloring problem of finite graphs was first considered by Tait in 1880 in connection to the map coloring problem~\cite{tait1880remarks}. Since then, the edge coloring problem has found numerous applications, including the design of electrical networks and in scheduling problems~\cite{fiorini1977edge}. In 1916, König~\cite{konig1916graphen} showed that bipartite graphs are class~I (in modern terminology) and the proof gives a natural method for finding the minimal edge coloring of such graphs in polynomial time. Vizing's theorem [\eq{vizing}], was proven in 1964. Vizing also proved that if the maximum degree of a planar graph is between 2 and 5, the graph can be either class and conjectured that planar graphs with maximum degree at least 6 are class~I~\cite{vizing1965critical,vizing1968some}. Subsequently, it was established that planar graphs of degree at least 7 are class~I. (Vizing proved that planar graphs of max degree at least 8 are class~I~\cite{vizing1965critical}. Thirty-five years later, Zhang~\cite{zhang2000every} and Sanders and Zhao~\cite{sanders2001planar}
proved that planar graphs of degree 7 are class~I.) The $k$-uniform lattice graphs are generally not bipartite. Additionally, although they are planar, they have a maximum degree of at most 6. So, the the $k$-uniform lattice graphs are not restricted to class I a priori.

Relatively little attention has been paid to the edge coloring problem in the context of infinite graphs. A notable exception is the theorem of de Bruijn and Erd\H{o}s from 1951~\cite{bruijn1951colour}, which states that if every finite subgraph of an (infinite) graph $G$ is $k$-vertex-colorable, then $G$ is $k$-vertex-colorable. While this theorem concerns the vertex coloring problem, it extends to $k$-edge-colorings \cite{bosak1972chromatic, neumann1954embedding}. It is mentioned in a remark in Ref.~\cite{parviainen2005inclusions} that all Archimedean and Laves lattices are Vizing class~I, but no proof nor references are provided.

\subsection{Applications}

One important application of solving the edge coloring problem for lattice graphs lies in the quantum computation of crystals. Paradigmatic models for quantum magnetism include the Ising model and the Heisenberg (Anti)ferromagnetic model \cite{ashcroft1976solid}. In these models, spin 1/2 particles, such as electrons, are uniquely associated with the vertices $v$ of a given graph $G$ and a magnetic interaction is introduced between two particles if their associated vertices are adjacent in $G$. For periodic crystals, $G$ is a lattice graph.

As opposed to classical computers, for quantum computers there are known methods to simulate these systems accurately and efficiently in general \cite{lloyd1996universal,childs2021theory}. In quantum computation, a qubit $q$ is the quantum analogue of a classical bit. Physically, a single qubit is commonly embedded in a fixed physical entity, such as an (artificial) atom~\cite{leon2021materials,nielsen2010quantum}. A quantum computation is carried out by applying quantum gates to these qubits directly. Typically, each quantum gate acts on one or two qubits simultaneously. At any given time, a qubit can participate in at most one quantum gate. The sequence of gates needed to perform a quantum computation is called a quantum circuit.

For the quantum simulation of aforementioned crystals using a technique called Trotterization~\cite{lloyd1996universal,childs2021theory}, each vertex $v$ of the graph $G$ describing the crystal structure is also uniquely associated with a qubit $q_v$. A circuit is then constructed that consists of a repetition of a smaller quantum circuit, known as the cycle. The cycle contains a single type of two-qubit gate $U$ that implements the magnetic interaction between the particles associated with the qubits it acts on. To assure that every magnetic interaction is taken into account, the cycle is to be designed such that for each edge $\{v,v'\}$ in $G$, there is at least one gate between qubits $q_v$ and $q_{v'}$. Letting every edge in $G$ represent a gate of type $U$ and encoding the time step in which this gate is performed as the color of that edge, a minimal edge coloring of $G$ corresponds to a cycle for the quantum simulation of spin-1/2 particles on $G$ (interacting according to the Ising or Heisenberg Hamiltonian) using the fewest number of time steps in the cycle.

Optimal-depth quantum circuits are crucial for pushing the boundaries of current noisy intermediate-scale quantum technology, where noise effects in the output become stronger with the depth of the quantum circuit. For this reason, minimal and proper edge colorings of the kagome lattice graph were considered in Refs.~\cite{kattemolle2022variational,bosse2022probing, kattemolle2023line}, of the one-dimensional, square, and hexagonal lattice graphs in Ref.~\cite{burkard2022recipes}, and of the heavy-hex lattice graph in Ref. \cite{youngseok2023evidence}. Minimal edge colorings of the square, triangular, kagome and square-octagon lattice graphs were considered in~\cite{zhu2022efficient,chen2023efficient} for the related problem of the verification of the minimal energy quantum state of spin-1 Heisenberg model defined on those lattice graphs.

Closely tied to quantum simulation is quantum optimization through the quantum approximate optimization algorithm (QAOA) \cite{fahri2014quantum}. In QAOA, the optimization of the cost function defining the problem involves the quantum simulation of a (problem instance specific) Ising model. Thus, similar to the situation in quantum simulation, minimal edge colorings correspond to depth optimal circuits for quantum optimization \cite{majumdar2021optimizing,herrman2021lower}. For this reason, a minimal edge coloring of the square lattice was used in Ref.~\cite{harrigan2021quantum}. Additionally, a minimal edge coloring of the heavy-hex lattice (equivalent to the one in \fig{miscellaneous}) was used in Ref. \cite{weidenfeller2022scaling} in the compilation of quantum circuits for QAOA.

\section{Conclusion}
\subsection{Summary}
We formally defined lattice graphs, patches of lattice graphs and the wrapping of those patches. In \thm{proper_edge_coloring}, we proved that an edge coloring of a patch of a lattice graph induces a proper edge coloring of the entire lattice graph if and only if the wrapped patch is properly edge colored and self-loop-free. Furthermore, by \lem{max_degree}, the degree of this wrapped patch is equal to the degree of the lattice graph.

These results form the cornerstone of \met{edge_coloring} that, depending on what is requested, finds type-II or type-I colorings of lattice graphs. If a type-II coloring is requested, the method seeks a patch of the lattice graph that is self-loop free and permits a type-II coloring. Such a patch is always found (\lem{simple}, \app{proofs}). Subsequently, the wrapped patch is type-II colored, unwrapped (retaining the edge coloring) and returned. In \thm{correctness}, we showed that this colored patch induces a type-II coloring of the lattice graph.

If a type-I coloring is requested, the same steps are followed, with the exception that the wrapped patch must permit a type-I coloring instead of a type-II coloring (and that the wrapped patch is subsequently type-I colored instead of type-II colored). Unlike in the case of a type-II coloring, we now lack upper bounds on the required patch size. This is because the lattice graph itself may be class~II, thereby not permitting any type-I coloring. Additionally, even if the lattice graph is known or assumed to be class~I, upper bounds on the necessary patch size so that the wrapped patch is class~I are lacking. Nevertheless, when a self-loop-free and class~I wrapped patch is found, a type-I coloring of this patch induces a type-I coloring of the lattice graph (\thm{correctness}).

For the sake of allowing rigorous bounds, some flexibility of \met{edge_coloring} was removed, resulting in \alg{edge_coloring_algorithm}. This algorithm finds type-II (and hence also proper) colorings of lattice graphs. Its worst-case running time is $O(D^4\mu^2)$. Here, $\mu$ is the number of edges in the basis graph that induces the lattice graph, and $D=\max\{dx_{\max},dy_{\max}\}$ is the maximum cell coordinate entry of any vertex in the lattice graph's basis graph.

We implemented \met{edge_coloring} and obtained a type-I coloring of all Archimedean, Laves and $k$-uniform ($k\leq 6$) lattice graphs, thus showing that these are class~I. This demonstrates the practical applicability of the edge coloring method for obtaining type-I and hence \emph{minimal} edge colorings, despite the lack of bounds on the running time of the method in this case.  Nevertheless, in \pro{class-II}, we  constructed a planar lattice graph that is provably class~II, showing there are instances where \met{edge_coloring} does not halt. Still, if it is known a priori that a certain lattice graph is class~II, requesting a type-II coloring from our method produces a minimal edge coloring of that graph.

One important practical application of our work lies in quantum simulation, quantum optimization, and quantum state verification. When constructing (the cycles of) the quantum circuits for these applications, a two-qubit quantum gate must be added for each edge of a graph $G$ with qubits on its vertices. The graph $G$ is often a lattice graph. Since each qubit can participate in at most one quantum gate at a time, associating colors with layers of quantum gates, the depth-optimal cycle is described by a minimal edge coloring of $G$. As a result, our work provides an automated solution for finding depth-optimal quantum circuits for quantum simulation, state verification and quantum optimization.

\subsection{Outlook}
The practical applicability of our method was demonstrated by the minimal edge coloring of a plethora of lattice graphs. Thus, the most immediate practical extension of our work is the application of the edge coloring method to other lattice graphs, including those available in crystallographic databases \cite{grazulis2012crystallography,downs2003american}.

Our work also raises interesting theoretical questions. Firstly, considering that all $k$-uniform lattice graphs are class~I for $k\leq 6$, it is natural to hypothesize that \emph{all $k$-uniform lattice graphs are class~I}. These lattice graphs appear to offer sufficient structure to prove such a hypothesis; in a $k$-uniform geometric lattice graph, there are essentially only $k$ vertex different figures (the vertex plus adjacent edges). All other vertex figures in a geometric lattice graph are related to these by symmetries of the geometric lattice graph. However, a minimal edge coloring generally breaks the symmetries of a geometric lattice graph so that it is not clear how this structure could be exploited.

Secondly, it seems evident that, for a class~I lattice graph, as the patch size is increased, the wrapped patch must eventually permit a type-I coloring. Hence, it seems evident that our edge coloring method always terminates when a type-I coloring is requested of a class~I lattice graph. While it is true that a patch of any size of any class~I lattice graph is itself class~I (and hence permits a type-I coloring), when that patch is wrapped, it is no longer a subgraph of the lattice graph. Another possible obstruction in establishing an upper bound on the required patch size is that, for a planar graph, wrapping alters the genus of the surface in which the graph can be embedded.

Thirdly, our method extends to the edge coloring of tilings using symmetries beyond translational symmetry and in geometries beyond the Euclidean plane. In the terminology of Conway et al.~\cite{conway2008symmetries}, the orbifold of a tiling is obtained by merging or \emph{folding} all points related to one another by symmetries of the tiling. (One point and the points that can be generated from it by symmetries of the tiling form the \emph{orbit} of that point under the symmetry group of the tiling.) In this sense, the wrapping of a patch of a geometric lattice graph can essentially be seen as creating the orbifold of that lattice graph, only considering the translational symmetry of the lattice graph (modulo the patch size). This suggests that the wrapping stage of the edge coloring method can be altered to include wrapping by other symmetry subgroups. Furthermore, the wrapped patches need not be orbifolds (modulo some symmetry subgroups) of Euclidean tilings, opening the way to, e.g., the edge coloring of the meshes of tilings of the hyperbolic plane.

It may be possible to alter \met{edge_coloring} and its analysis to obtain analogous results for the minimal or nearly minimal \emph{vertex} coloring of lattice graphs. \met{edge_coloring} crucially relies on a lower bound on the number of colors that is both tight [\eq{vizing}] and can be obtained efficiently (\lem{max_degree}). Although numerous lower and upper bounds exist for the chromatic number $\chi(G)$ of finite graphs $G$, in general these bounds are not as applicable as Vizing's theorem in the case of edge coloring. Additionally, more work is required to show how bounds on the chromatic number of patches of a lattice graph relate to the chromatic number of infinite lattice graphs induced by those patches.

\vspace{1em}
\textbf{Data availability.}
The code and data are available as Supplemental Material~\cite{Note2}. It also includes images depicting minimal edge colorings of all 1318 lattice graphs discussed in this paper.

\begin{acknowledgements}
We thank G. Burkard, S. Storandt and R. de Wolf for useful discussion. We acknowledge funding from the Competence Center Quantum Computing Baden-W\"urttemberg, under the
project QORA II.
\end{acknowledgements}
\bibliography{bib.bib}

\begin{thebibliography}{69}%
\makeatletter
\providecommand \@ifxundefined [1]{%
 \@ifx{#1\undefined}
}%
\providecommand \@ifnum [1]{%
 \ifnum #1\expandafter \@firstoftwo
 \else \expandafter \@secondoftwo
 \fi
}%
\providecommand \@ifx [1]{%
 \ifx #1\expandafter \@firstoftwo
 \else \expandafter \@secondoftwo
 \fi
}%
\providecommand \natexlab [1]{#1}%
\providecommand \enquote  [1]{``#1''}%
\providecommand \bibnamefont  [1]{#1}%
\providecommand \bibfnamefont [1]{#1}%
\providecommand \citenamefont [1]{#1}%
\providecommand \href@noop [0]{\@secondoftwo}%
\providecommand \href [0]{\begingroup \@sanitize@url \@href}%
\providecommand \@href[1]{\@@startlink{#1}\@@href}%
\providecommand \@@href[1]{\endgroup#1\@@endlink}%
\providecommand \@sanitize@url [0]{\catcode `\\12\catcode `\$12\catcode
  `\&12\catcode `\#12\catcode `\^12\catcode `\_12\catcode `\%12\relax}%
\providecommand \@@startlink[1]{}%
\providecommand \@@endlink[0]{}%
\providecommand \url  [0]{\begingroup\@sanitize@url \@url }%
\providecommand \@url [1]{\endgroup\@href {#1}{\urlprefix }}%
\providecommand \urlprefix  [0]{URL }%
\providecommand \Eprint [0]{\href }%
\providecommand \doibase [0]{https://doi.org/}%
\providecommand \selectlanguage [0]{\@gobble}%
\providecommand \bibinfo  [0]{\@secondoftwo}%
\providecommand \bibfield  [0]{\@secondoftwo}%
\providecommand \translation [1]{[#1]}%
\providecommand \BibitemOpen [0]{}%
\providecommand \bibitemStop [0]{}%
\providecommand \bibitemNoStop [0]{.\EOS\space}%
\providecommand \EOS [0]{\spacefactor3000\relax}%
\providecommand \BibitemShut  [1]{\csname bibitem#1\endcsname}%
\let\auto@bib@innerbib\@empty
\bibitem [{\citenamefont {Vizing}(1964)}]{vizing1964estimate}%
  \BibitemOpen
  \bibfield  {author} {\bibinfo {author} {\bibfnamefont {V.~G.}\ \bibnamefont
  {Vizing}},\ }\bibfield  {title} {\bibinfo {title} {{On an estimate of the
  chromatic class of a $p${-}graph}},\ }\href@noop {} {\bibfield  {journal}
  {\bibinfo  {journal} {Diskret. Analiz}\ }\textbf {\bibinfo {volume} {3}},\
  \bibinfo {pages} {25--30} (\bibinfo {year} {1964})}\BibitemShut {NoStop}%
\bibitem [{\citenamefont {Fiorini}\ and\ \citenamefont
  {Wilson}(1977)}]{fiorini1977edge}%
  \BibitemOpen
  \bibfield  {author} {\bibinfo {author} {\bibfnamefont {S.}~\bibnamefont
  {Fiorini}}\ and\ \bibinfo {author} {\bibfnamefont {R.~J.}\ \bibnamefont
  {Wilson}},\ }\href@noop {} {\emph {\bibinfo {title} {Edge{-}colourings of
  graphs}}},\ \bibinfo {series} {Research notes in mathematics}\ No.~\bibinfo
  {number} {16}\ (\bibinfo  {publisher} {Pitman},\ \bibinfo {year}
  {1977})\BibitemShut {NoStop}%
\bibitem [{\citenamefont {Gupta}(1967)}]{gupta1967studies}%
  \BibitemOpen
  \bibfield  {author} {\bibinfo {author} {\bibfnamefont {R.~P.}\ \bibnamefont
  {Gupta}},\ }\emph {\bibinfo {title} {{Studies in the Theory of Graphs}}},\
  \href@noop {} {Ph.D. thesis},\ \bibinfo  {school} {Institute Fundamental
  Research, Bombay} (\bibinfo {year} {1967})\BibitemShut {NoStop}%
\bibitem [{\citenamefont {Stiebitz}\ \emph {et~al.}(2012)\citenamefont
  {Stiebitz}, \citenamefont {Scheide}, \citenamefont {Toft},\ and\
  \citenamefont {Favrholdt}}]{stiebitz2012graph}%
  \BibitemOpen
  \bibfield  {author} {\bibinfo {author} {\bibfnamefont {M.}~\bibnamefont
  {Stiebitz}}, \bibinfo {author} {\bibfnamefont {D.}~\bibnamefont {Scheide}},
  \bibinfo {author} {\bibfnamefont {B.}~\bibnamefont {Toft}},\ and\ \bibinfo
  {author} {\bibfnamefont {L.~M.}\ \bibnamefont {Favrholdt}},\ }\href@noop {}
  {\emph {\bibinfo {title} {{Graph edge coloring: Vizing's theorem and
  Goldberg's conjecture}}}}\ (\bibinfo  {publisher} {John Wiley \& Sons},\
  \bibinfo {year} {2012})\BibitemShut {NoStop}%
\bibitem [{\citenamefont {Holyer}(1981)}]{holyer1981np_completeness}%
  \BibitemOpen
  \bibfield  {author} {\bibinfo {author} {\bibfnamefont {I.}~\bibnamefont
  {Holyer}},\ }\bibfield  {title} {\bibinfo {title} {{The NP-Completeness of
  Edge-Coloring}},\ }\href {https://doi.org/10.1137/0210055} {\bibfield
  {journal} {\bibinfo  {journal} {SIAM Journal on Computing}\ }\textbf
  {\bibinfo {volume} {10}},\ \bibinfo {pages} {718–720} (\bibinfo {year}
  {1981})}\BibitemShut {NoStop}%
\bibitem [{\citenamefont {Misra}\ and\ \citenamefont
  {Gries}(1992)}]{misra1992constructive}%
  \BibitemOpen
  \bibfield  {author} {\bibinfo {author} {\bibfnamefont {J.}~\bibnamefont
  {Misra}}\ and\ \bibinfo {author} {\bibfnamefont {D.}~\bibnamefont {Gries}},\
  }\bibfield  {title} {\bibinfo {title} {{Constructive proof of Vizing’s
  theorem}},\ }\href {https://doi.org/10.1016/0020-0190(92)90041-s} {\bibfield
  {journal} {\bibinfo  {journal} {Information Processing Letters}\ }\textbf
  {\bibinfo {volume} {41}},\ \bibinfo {pages} {131–133} (\bibinfo {year}
  {1992})}\BibitemShut {NoStop}%
\bibitem [{\citenamefont {Ashcroft}\ and\ \citenamefont
  {Mermin}(1976)}]{ashcroft1976solid}%
  \BibitemOpen
  \bibfield  {author} {\bibinfo {author} {\bibfnamefont {N.~W.}\ \bibnamefont
  {Ashcroft}}\ and\ \bibinfo {author} {\bibfnamefont {N.~D.}\ \bibnamefont
  {Mermin}},\ }\href@noop {} {\emph {\bibinfo {title} {{Solid State
  Physics}}}}\ (\bibinfo  {publisher} {{Saunders College Publishing}},\
  \bibinfo {year} {1976})\BibitemShut {NoStop}%
\bibitem [{Note1()}]{Note1}%
  \BibitemOpen
  \bibinfo {note} {In Ref.~\cite {ashcroft1976solid}, this is called a `lattice
  with a basis'. However, to avoid confusion, we reserve the use of `basis' for
  the basis graph and use `seeds' instead of `basis' here. This is the
  terminology from Refs.~\cite
  {sanchez2019acquiring,sanchez2021integer,medeiros2018synthesizing}}\BibitemShut
  {NoStop}%
\bibitem [{\citenamefont {De~Moura}\ and\ \citenamefont
  {Bj\o{}rner}(2008)}]{moura2008z3}%
  \BibitemOpen
  \bibfield  {author} {\bibinfo {author} {\bibfnamefont {L.}~\bibnamefont
  {De~Moura}}\ and\ \bibinfo {author} {\bibfnamefont {N.}~\bibnamefont
  {Bj\o{}rner}},\ }\bibfield  {title} {\bibinfo {title} {{Z3: An Efficient SMT
  Solver}},\ }in\ \href {https://doi.org/10.1007/978-3-540-78800-3\_24} {\emph
  {\bibinfo {booktitle} {Proceedings of the Theory and Practice of Software,
  14th International Conference on Tools and Algorithms for the Construction
  and Analysis of Systems}}},\ \bibinfo {series and number}
  {TACAS'08/ETAPS'08}\ (\bibinfo  {publisher} {Springer-Verlag},\ \bibinfo
  {address} {Berlin, Heidelberg},\ \bibinfo {year} {2008})\ p.\ \bibinfo
  {pages} {337–340}\BibitemShut {NoStop}%
\bibitem [{git(2023)}]{githubz3}%
  \BibitemOpen
  \href@noop {} {\bibinfo {title} {Z3}},\ \bibinfo {howpublished}
  {\url{https://github.com/Z3Prover/z3}} (\bibinfo {year} {accessed Oct.
  2023})\BibitemShut {NoStop}%
\bibitem [{Note2()}]{Note2}%
  \BibitemOpen
  \bibinfo {note} {See Supplemental Material at [URL will be inserted by
  publisher] for all code and data related to this paper. These are also
  available at at \protect \url
  {https://github.com/kattemolle/ecolpy}}\BibitemShut {NoStop}%
\bibitem [{\citenamefont {Gr{\"u}nbaum}\ and\ \citenamefont
  {Shephard}(2016)}]{grunbaum2016tilings}%
  \BibitemOpen
  \bibfield  {author} {\bibinfo {author} {\bibfnamefont {B.}~\bibnamefont
  {Gr{\"u}nbaum}}\ and\ \bibinfo {author} {\bibfnamefont {G.~C.}\ \bibnamefont
  {Shephard}},\ }\href@noop {} {\emph {\bibinfo {title} {{Tilings and
  Patterns}}}}\ (\bibinfo  {publisher} {Dover Publications},\ \bibinfo {year}
  {2016})\BibitemShut {NoStop}%
\bibitem [{\citenamefont {Kepler}(1619)}]{kepler1619harmonices}%
  \BibitemOpen
  \bibfield  {author} {\bibinfo {author} {\bibfnamefont {J.}~\bibnamefont
  {Kepler}},\ }\href@noop {} {\emph {\bibinfo {title} {{Harmonices mundi libri
  V}}}}\ (\bibinfo  {publisher} {Johann Planck},\ \bibinfo {year}
  {1619})\BibitemShut {NoStop}%
\bibitem [{\citenamefont {Aiton}\ \emph {et~al.}(1997)\citenamefont {Aiton},
  \citenamefont {Duncan},\ and\ \citenamefont {Field}}]{aiton1997harmony}%
  \BibitemOpen
  \bibfield  {author} {\bibinfo {author} {\bibfnamefont {E.~J.}\ \bibnamefont
  {Aiton}}, \bibinfo {author} {\bibfnamefont {A.~M.}\ \bibnamefont {Duncan}},\
  and\ \bibinfo {author} {\bibfnamefont {J.~V.}\ \bibnamefont {Field}},\
  }\href@noop {} {\emph {\bibinfo {title} {{The Harmony of the World by
  Johannes Kepler}}}}\ (\bibinfo  {publisher} {{the American Philosophical
  Society}},\ \bibinfo {year} {1997})\BibitemShut {NoStop}%
\bibitem [{\citenamefont {Conway}\ \emph {et~al.}(2008)\citenamefont {Conway},
  \citenamefont {Burgiel},\ and\ \citenamefont
  {Goodman-Strauss}}]{conway2008symmetries}%
  \BibitemOpen
  \bibfield  {author} {\bibinfo {author} {\bibfnamefont {J.~H.}\ \bibnamefont
  {Conway}}, \bibinfo {author} {\bibfnamefont {H.}~\bibnamefont {Burgiel}},\
  and\ \bibinfo {author} {\bibfnamefont {C.}~\bibnamefont {Goodman-Strauss}},\
  }\href@noop {} {\emph {\bibinfo {title} {{The Symmetries of Things}}}}\
  (\bibinfo  {publisher} {CRC Press},\ \bibinfo {year} {2008})\BibitemShut
  {NoStop}%
\bibitem [{\citenamefont {Galebach}(2023)}]{galebach2023n_uniform}%
  \BibitemOpen
  \bibfield  {author} {\bibinfo {author} {\bibfnamefont {B.}~\bibnamefont
  {Galebach}},\ }\href@noop {} {\bibinfo {title} {n{-}uniform tilings}},\
  \bibinfo {howpublished} {\url{http://www.probabilitysports.com/tilings.html}}
  (\bibinfo {year} {accessed Oct. 2023})\BibitemShut {NoStop}%
\bibitem [{\citenamefont
  {Soto~S{\'a}nchez}(2023{\natexlab{a}})}]{galebach_json}%
  \BibitemOpen
  \bibfield  {author} {\bibinfo {author} {\bibfnamefont {J.~E.}\ \bibnamefont
  {Soto~S{\'a}nchez}},\ }\href@noop {} {\bibinfo {title} {Galebach
  collection}},\ \bibinfo {howpublished}
  {\url{https://chequesoto.info/tiling/JSON_Galebach.json}} (\bibinfo {year}
  {accessed Oct. 2023}{\natexlab{a}})\BibitemShut {NoStop}%
\bibitem [{\citenamefont {Kr{\"o}tenheerdt}(1968)}]{krotenheerdt1968homogenen}%
  \BibitemOpen
  \bibfield  {author} {\bibinfo {author} {\bibfnamefont {O.}~\bibnamefont
  {Kr{\"o}tenheerdt}},\ }\href@noop {} {\emph {\bibinfo {title} {{Die homogenen
  Mosaike n{-}ter Ordnung in der euklidischen Ebene}}}}\ (\bibinfo  {publisher}
  {{Martin{-}Luther{-}Universit\"at Halle{-}Wittenberg}},\ \bibinfo {year}
  {1968})\BibitemShut {NoStop}%
\bibitem [{\citenamefont {Chavey}(1989)}]{chavey1989tilings}%
  \BibitemOpen
  \bibfield  {author} {\bibinfo {author} {\bibfnamefont {D.}~\bibnamefont
  {Chavey}},\ }\bibfield  {title} {\bibinfo {title} {Tilings by regular
  polygons{\textemdash}{II}},\ }\href
  {https://doi.org/10.1016/0898-1221(89)90156-9} {\bibfield  {journal}
  {\bibinfo  {journal} {Computers {\&} Mathematics with Applications}\ }\textbf
  {\bibinfo {volume} {17}},\ \bibinfo {pages} {147--165} (\bibinfo {year}
  {1989})}\BibitemShut {NoStop}%
\bibitem [{oei(2023)}]{oeisA06859}%
  \BibitemOpen
  \href@noop {} {\bibinfo {title} {{Number of n-uniform tilings, Sequence
  A06859 from The On-Line Encyclopedia of Integer Sequences}}},\ \bibinfo
  {howpublished} {\url{https://oeis.org/A068599}} (\bibinfo {year} {accessed
  Oct. 2023})\BibitemShut {NoStop}%
\bibitem [{\citenamefont {Weisstein}(2023)}]{weisstein2023uniform}%
  \BibitemOpen
  \bibfield  {author} {\bibinfo {author} {\bibfnamefont {E.~W.}\ \bibnamefont
  {Weisstein}},\ }\href@noop {} {\bibinfo {title} {{``Uniform Tessellation."
  From MathWorld{--}A Wolfram Web Resource}}},\ \bibinfo {howpublished}
  {\url{https://mathworld.wolfram.com/UniformTessellation.html}} (\bibinfo
  {year} {accessed Oct. 2023})\BibitemShut {NoStop}%
\bibitem [{\citenamefont {Čtrnáct}(2023)}]{ctrnact2023catalog}%
  \BibitemOpen
  \bibfield  {author} {\bibinfo {author} {\bibfnamefont {M.}~\bibnamefont
  {Čtrnáct}},\ }\href@noop {} {\bibinfo {title} {Catalog of tessellations}},\
  \bibinfo {howpublished}
  {\url{https://zenorogue.github.io/tes-catalog/?c=k-uniform\%2F}} (\bibinfo
  {year} {accessed Oct. 2023})\BibitemShut {NoStop}%
\bibitem [{\citenamefont {Gomez{-}Jauregui}\ \emph {et~al.}(2021)\citenamefont
  {Gomez{-}Jauregui}, \citenamefont {Hogg}, \citenamefont {Manchado},\ and\
  \citenamefont {Otero}}]{gomez2021gomjau}%
  \BibitemOpen
  \bibfield  {author} {\bibinfo {author} {\bibfnamefont {V.}~\bibnamefont
  {Gomez{-}Jauregui}}, \bibinfo {author} {\bibfnamefont {H.}~\bibnamefont
  {Hogg}}, \bibinfo {author} {\bibfnamefont {C.}~\bibnamefont {Manchado}},\
  and\ \bibinfo {author} {\bibfnamefont {C.}~\bibnamefont {Otero}},\ }\bibfield
   {title} {\bibinfo {title} {{GomJau-Hogg's Notation for Automatic Generation
  of k-Uniform Tessellations with ANTWERP v3.0}},\ }\bibfield  {journal}
  {\bibinfo  {journal} {Symmetry}\ }\textbf {\bibinfo {volume} {13}},\ \href
  {https://doi.org/10.3390/sym13122376} {10.3390/sym13122376} (\bibinfo {year}
  {2021})\BibitemShut {NoStop}%
\bibitem [{\citenamefont {Soto~S{\'{a}}nchez}\ \emph
  {et~al.}(2019)\citenamefont {Soto~S{\'{a}}nchez}, \citenamefont {Medeiros~e
  S{\'{a}}},\ and\ \citenamefont {de~Figueiredo}}]{sanchez2019acquiring}%
  \BibitemOpen
  \bibfield  {author} {\bibinfo {author} {\bibfnamefont {J.~E.}\ \bibnamefont
  {Soto~S{\'{a}}nchez}}, \bibinfo {author} {\bibfnamefont {A.}~\bibnamefont
  {Medeiros~e S{\'{a}}}},\ and\ \bibinfo {author} {\bibfnamefont {L.~H.}\
  \bibnamefont {de~Figueiredo}},\ }\bibfield  {title} {\bibinfo {title}
  {Acquiring periodic tilings of regular polygons from images},\ }\href
  {https://doi.org/10.1007/s00371-019-01665-y} {\bibfield  {journal} {\bibinfo
  {journal} {{The Visual Computer}}\ }\textbf {\bibinfo {volume} {35}},\
  \bibinfo {pages} {899--907} (\bibinfo {year} {2019})}\BibitemShut {NoStop}%
\bibitem [{\citenamefont {Soto~S{\'{a}}nchez}\ \emph
  {et~al.}(2021)\citenamefont {Soto~S{\'{a}}nchez}, \citenamefont {Weyrich},
  \citenamefont {Medeiros~e S{\'{a}}},\ and\ \citenamefont
  {de~Figueiredo}}]{sanchez2021integer}%
  \BibitemOpen
  \bibfield  {author} {\bibinfo {author} {\bibfnamefont {J.~E.}\ \bibnamefont
  {Soto~S{\'{a}}nchez}}, \bibinfo {author} {\bibfnamefont {T.}~\bibnamefont
  {Weyrich}}, \bibinfo {author} {\bibfnamefont {A.}~\bibnamefont {Medeiros~e
  S{\'{a}}}},\ and\ \bibinfo {author} {\bibfnamefont {L.~H.}\ \bibnamefont
  {de~Figueiredo}},\ }\bibfield  {title} {\bibinfo {title} {{An integer
  representation for periodic tilings of the plane by regular polygons}},\
  }\href {https://doi.org/10.1016/j.cag.2021.01.007} {\bibfield  {journal}
  {\bibinfo  {journal} {{Computers {\&} Graphics}}\ }\textbf {\bibinfo {volume}
  {95}},\ \bibinfo {pages} {69--80} (\bibinfo {year} {2021})}\BibitemShut
  {NoStop}%
\bibitem [{\citenamefont {Medeiros~e S{\'a}}\ \emph {et~al.}(2018)\citenamefont
  {Medeiros~e S{\'a}}, \citenamefont {{de Figueiredo}},\ and\ \citenamefont
  {Soto~S{\'a}nchez}}]{medeiros2018synthesizing}%
  \BibitemOpen
  \bibfield  {author} {\bibinfo {author} {\bibfnamefont {A.}~\bibnamefont
  {Medeiros~e S{\'a}}}, \bibinfo {author} {\bibfnamefont {L.~H.}\ \bibnamefont
  {{de Figueiredo}}},\ and\ \bibinfo {author} {\bibfnamefont {J.~E.}\
  \bibnamefont {Soto~S{\'a}nchez}},\ }\bibfield  {title} {\bibinfo {title}
  {{Synthesizing Periodic Tilings of Regular Polygons}},\ }\bibfield  {journal}
  {\bibinfo  {journal} {2018 31st SIBGRAPI Conference on Graphics, Patterns and
  Images (SIBGRAPI)}\ }\href {https://doi.org/10.1109/sibgrapi.2018.00009}
  {10.1109/sibgrapi.2018.00009} (\bibinfo {year} {2018})\BibitemShut {NoStop}%
\bibitem [{\citenamefont {Sá}\ and\ \citenamefont {Medeiros~e
  Sá}(2017)}]{sa2017sobre}%
  \BibitemOpen
  \bibfield  {author} {\bibinfo {author} {\bibfnamefont {R.}~\bibnamefont
  {Sá}}\ and\ \bibinfo {author} {\bibfnamefont {A.}~\bibnamefont {Medeiros~e
  Sá}},\ }\href@noop {} {\emph {\bibinfo {title} {Sobre Malhas
  Arquimedianas}}}\ (\bibinfo  {publisher} {Editora Olhares},\ \bibinfo {year}
  {2017})\BibitemShut {NoStop}%
\bibitem [{\citenamefont {Soto~S{\'a}nchez}(2023{\natexlab{b}})}]{saesa_json}%
  \BibitemOpen
  \bibfield  {author} {\bibinfo {author} {\bibfnamefont {J.~E.}\ \bibnamefont
  {Soto~S{\'a}nchez}},\ }\href@noop {} {\bibinfo {title} {{Sa{\&}Sa
  collection}}},\ \bibinfo {howpublished}
  {\url{https://chequesoto.info/tiling/JSON_SaeSa.json}} (\bibinfo {year}
  {accessed Oct. 2023}{\natexlab{b}})\BibitemShut {NoStop}%
\bibitem [{\citenamefont {Roussopoulos}(1973)}]{roussopoulos1973max}%
  \BibitemOpen
  \bibfield  {author} {\bibinfo {author} {\bibfnamefont {N.~D.}\ \bibnamefont
  {Roussopoulos}},\ }\bibfield  {title} {\bibinfo {title} {{A max {$\{m,n\}$}
  algorithm for determining the graph {H} from its line graph {G}}},\ }\href
  {https://doi.org/10.1016/0020-0190(73)90029-X} {\bibfield  {journal}
  {\bibinfo  {journal} {Information Processing Letters}\ }\textbf {\bibinfo
  {volume} {2}},\ \bibinfo {pages} {108--112} (\bibinfo {year}
  {1973})}\BibitemShut {NoStop}%
\bibitem [{\citenamefont {Lewis}(2016)}]{lewis2016guide}%
  \BibitemOpen
  \bibfield  {author} {\bibinfo {author} {\bibfnamefont {R.~M.~R.}\
  \bibnamefont {Lewis}},\ }\href {https://doi.org/10.1007/978-3-319-25730-3}
  {\emph {\bibinfo {title} {{Guide to Graph Colouring}}}}\ (\bibinfo
  {publisher} {Springer},\ \bibinfo {year} {2016})\BibitemShut {NoStop}%
\bibitem [{\citenamefont {Appel}\ and\ \citenamefont
  {Haken}(1989)}]{appel1989every}%
  \BibitemOpen
  \bibfield  {author} {\bibinfo {author} {\bibfnamefont {K.}~\bibnamefont
  {Appel}}\ and\ \bibinfo {author} {\bibfnamefont {W.}~\bibnamefont {Haken}},\
  }\href@noop {} {\emph {\bibinfo {title} {{Every Planar Map is Four
  Colorable}}}},\ \bibinfo {series} {Contemporary Mathematics}, Vol.~\bibinfo
  {volume} {98}\ (\bibinfo  {publisher} {American Mathematical Society},\
  \bibinfo {year} {1989})\BibitemShut {NoStop}%
\bibitem [{\citenamefont {Brooks}(1941)}]{brooks1941colouring}%
  \BibitemOpen
  \bibfield  {author} {\bibinfo {author} {\bibfnamefont {R.~L.}\ \bibnamefont
  {Brooks}},\ }\bibfield  {title} {\bibinfo {title} {On colouring the nodes of
  a network},\ }\href {https://doi.org/10.1017/S030500410002168X} {\bibfield
  {journal} {\bibinfo  {journal} {Mathematical Proceedings of the Cambridge
  Philosophical Society}\ }\textbf {\bibinfo {volume} {37}},\ \bibinfo {pages}
  {194–197} (\bibinfo {year} {1941})}\BibitemShut {NoStop}%
\bibitem [{\citenamefont {Frankenheim}(1835)}]{frankenheim1835lehre}%
  \BibitemOpen
  \bibfield  {author} {\bibinfo {author} {\bibfnamefont {M.~L.}\ \bibnamefont
  {Frankenheim}},\ }\href@noop {} {\emph {\bibinfo {title} {{Die Lehre von der
  Coh{\"a}sion, umfassend die Elasticit{\"a}t der Gase, die Elasticit{\"a}t und
  Coh{\"a}renz der fl{\"u}ssigen und festen K{\"o}rper und die
  Krystallkunde}}}}\ (\bibinfo  {publisher} {August Schulz und Comp.},\
  \bibinfo {year} {1835})\BibitemShut {NoStop}%
\bibitem [{\citenamefont {Bravais}(1850)}]{bravais1850memoire}%
  \BibitemOpen
  \bibfield  {author} {\bibinfo {author} {\bibfnamefont {A.}~\bibnamefont
  {Bravais}},\ }\bibfield  {title} {\bibinfo {title} {{Mémoire sur les
  systèmes formés par les points distribués régulièrement sur un plan ou
  dans l'espace}},\ }\href@noop {} {\bibfield  {journal} {\bibinfo  {journal}
  {J. Ecole Polytechnique}\ }\textbf {\bibinfo {volume} {19}},\ \bibinfo
  {pages} {1--128} (\bibinfo {year} {1850})}\BibitemShut {NoStop}%
\bibitem [{\citenamefont {Burckhardt}(2013)}]{burckhardt2013symmetrie}%
  \BibitemOpen
  \bibfield  {author} {\bibinfo {author} {\bibfnamefont {J.~J.}\ \bibnamefont
  {Burckhardt}},\ }\href {https://doi.org/10.1007/978-3-0348-6027-7} {\emph
  {\bibinfo {title} {{Die Symmetrie der Kristalle}}}}\ (\bibinfo  {publisher}
  {Birkhäuser Basel},\ \bibinfo {year} {2013})\BibitemShut {NoStop}%
\bibitem [{\citenamefont {Hilbert}\ and\ \citenamefont
  {Cohn-Vossen}(1932)}]{hilbert1932anschauliche}%
  \BibitemOpen
  \bibfield  {author} {\bibinfo {author} {\bibfnamefont {D.}~\bibnamefont
  {Hilbert}}\ and\ \bibinfo {author} {\bibfnamefont {S.}~\bibnamefont
  {Cohn-Vossen}},\ }\href {https://doi.org/10.1007/978-3-662-36685-1} {\emph
  {\bibinfo {title} {{Anschauliche Geometrie}}}},\ \bibinfo {series}
  {{Grundlehren der mathematischen Wissenschaften}}, Vol.~\bibinfo {volume}
  {37}\ (\bibinfo  {publisher} {Springer Berlin, Heidelberg},\ \bibinfo {year}
  {1932})\BibitemShut {NoStop}%
\bibitem [{\citenamefont {Bauer}\ \emph {et~al.}(2011)\citenamefont {Bauer},
  \citenamefont {Carr}, \citenamefont {Evertz}, \citenamefont {Feiguin},
  \citenamefont {Freire}, \citenamefont {Fuchs}, \citenamefont {Gamper},
  \citenamefont {Gukelberger}, \citenamefont {Gull}, \citenamefont {Guertler}
  \emph {et~al.}}]{bauer2011alps}%
  \BibitemOpen
  \bibfield  {author} {\bibinfo {author} {\bibfnamefont {B.}~\bibnamefont
  {Bauer}}, \bibinfo {author} {\bibfnamefont {L.}~\bibnamefont {Carr}},
  \bibinfo {author} {\bibfnamefont {H.~G.}\ \bibnamefont {Evertz}}, \bibinfo
  {author} {\bibfnamefont {A.}~\bibnamefont {Feiguin}}, \bibinfo {author}
  {\bibfnamefont {J.}~\bibnamefont {Freire}}, \bibinfo {author} {\bibfnamefont
  {S.}~\bibnamefont {Fuchs}}, \bibinfo {author} {\bibfnamefont
  {L.}~\bibnamefont {Gamper}}, \bibinfo {author} {\bibfnamefont
  {J.}~\bibnamefont {Gukelberger}}, \bibinfo {author} {\bibfnamefont
  {E.}~\bibnamefont {Gull}}, \bibinfo {author} {\bibfnamefont {S.}~\bibnamefont
  {Guertler}}, \emph {et~al.},\ }\bibfield  {title} {\bibinfo {title} {{The
  ALPS project release 2.0: open source software for strongly correlated
  systems}},\ }\href {https://doi.org/10.1088/1742-5468/2011/05/P05001}
  {\bibfield  {journal} {\bibinfo  {journal} {J. Stat. Mech.}\ }\textbf
  {\bibinfo {volume} {2011}},\ \bibinfo {pages} {P05001} (\bibinfo {year}
  {2011})}\BibitemShut {NoStop}%
\bibitem [{\citenamefont {Todo}(2023)}]{todo2023lattice}%
  \BibitemOpen
  \bibfield  {author} {\bibinfo {author} {\bibfnamefont {S.}~\bibnamefont
  {Todo}},\ }\href@noop {} {\bibinfo {title} {{lattice}}},\ \bibinfo
  {howpublished} {\url{https://github.com/todo-group/lattice}} (\bibinfo {year}
  {accessed Nov. 2023})\BibitemShut {NoStop}%
\bibitem [{\citenamefont {Jones}(2023)}]{dylan2023lattpy}%
  \BibitemOpen
  \bibfield  {author} {\bibinfo {author} {\bibfnamefont {D.}~\bibnamefont
  {Jones}},\ }\href@noop {} {\bibinfo {title} {{LattPy}}},\ \bibinfo
  {howpublished} {\url{https://github.com/dylanljones/lattpy}} (\bibinfo {year}
  {accessed Nov. 2023})\BibitemShut {NoStop}%
\bibitem [{\citenamefont {Harary}(1969)}]{harary1969graph}%
  \BibitemOpen
  \bibfield  {author} {\bibinfo {author} {\bibfnamefont {F.}~\bibnamefont
  {Harary}},\ }\href@noop {} {\emph {\bibinfo {title} {{Graph Theory}}}}\
  (\bibinfo  {publisher} {Addison{–}Wesley},\ \bibinfo {year}
  {1969})\BibitemShut {NoStop}%
\bibitem [{\citenamefont {Tait}(1880)}]{tait1880remarks}%
  \BibitemOpen
  \bibfield  {author} {\bibinfo {author} {\bibfnamefont {P.~G.}\ \bibnamefont
  {Tait}},\ }\bibfield  {title} {\bibinfo {title} {Remarks on the colouring of
  maps},\ }in\ \href@noop {} {\emph {\bibinfo {booktitle} {Proc. Roy. Soc.
  Edinburgh}}},\ Vol.~\bibinfo {volume} {10}\ (\bibinfo {year} {1880})\ pp.\
  \bibinfo {pages} {501--503}\BibitemShut {NoStop}%
\bibitem [{\citenamefont {K{\"o}nig}(1916)}]{konig1916graphen}%
  \BibitemOpen
  \bibfield  {author} {\bibinfo {author} {\bibfnamefont {D.}~\bibnamefont
  {K{\"o}nig}},\ }\bibfield  {title} {\bibinfo {title} {{{\"U}ber Graphen und
  ihre Anwendung auf Determinantentheorie und Mengenlehre}},\ }\href@noop {}
  {\bibfield  {journal} {\bibinfo  {journal} {Mathematische Annalen}\ }\textbf
  {\bibinfo {volume} {77}},\ \bibinfo {pages} {453--465} (\bibinfo {year}
  {1916})}\BibitemShut {NoStop}%
\bibitem [{\citenamefont {Vizing}(1965)}]{vizing1965critical}%
  \BibitemOpen
  \bibfield  {author} {\bibinfo {author} {\bibfnamefont {V.~G.}\ \bibnamefont
  {Vizing}},\ }\bibfield  {title} {\bibinfo {title} {Critical graphs with given
  chromatic class},\ }\href@noop {} {\bibfield  {journal} {\bibinfo  {journal}
  {Discret. Analiz}\ }\textbf {\bibinfo {volume} {5}},\ \bibinfo {pages}
  {9--17} (\bibinfo {year} {1965})}\BibitemShut {NoStop}%
\bibitem [{\citenamefont {Vizing}(1968)}]{vizing1968some}%
  \BibitemOpen
  \bibfield  {author} {\bibinfo {author} {\bibfnamefont {V.~G.}\ \bibnamefont
  {Vizing}},\ }\bibfield  {title} {\bibinfo {title} {Some unsolved problems in
  graph theory},\ }\href {https://doi.org/10.1070/RM1968v023n06ABEH001252}
  {\bibfield  {journal} {\bibinfo  {journal} {Russian Mathematical Surveys}\
  }\textbf {\bibinfo {volume} {23}},\ \bibinfo {pages} {125} (\bibinfo {year}
  {1968})}\BibitemShut {NoStop}%
\bibitem [{\citenamefont {Zhang}(2000)}]{zhang2000every}%
  \BibitemOpen
  \bibfield  {author} {\bibinfo {author} {\bibfnamefont {L.}~\bibnamefont
  {Zhang}},\ }\bibfield  {title} {\bibinfo {title} {{Every Planar Graph with
  Maximum Degree 7 Is of Class 1}},\ }\href
  {https://doi.org/10.1007/s003730070009} {\bibfield  {journal} {\bibinfo
  {journal} {Graphs and Combinatorics}\ }\textbf {\bibinfo {volume} {16}},\
  \bibinfo {pages} {467–495} (\bibinfo {year} {2000})}\BibitemShut {NoStop}%
\bibitem [{\citenamefont {Sanders}\ and\ \citenamefont
  {Zhao}(2001)}]{sanders2001planar}%
  \BibitemOpen
  \bibfield  {author} {\bibinfo {author} {\bibfnamefont {D.~P.}\ \bibnamefont
  {Sanders}}\ and\ \bibinfo {author} {\bibfnamefont {Y.}~\bibnamefont {Zhao}},\
  }\bibfield  {title} {\bibinfo {title} {{Planar Graphs of Maximum Degree Seven
  are Class I}},\ }\href {https://doi.org/10.1006/jctb.2001.2047} {\bibfield
  {journal} {\bibinfo  {journal} {Journal of Combinatorial Theory, Series B}\
  }\textbf {\bibinfo {volume} {83}},\ \bibinfo {pages} {201–212} (\bibinfo
  {year} {2001})}\BibitemShut {NoStop}%
\bibitem [{\citenamefont {{de Bruijn}}\ and\ \citenamefont
  {Erd\H{o}s}(1951)}]{bruijn1951colour}%
  \BibitemOpen
  \bibfield  {author} {\bibinfo {author} {\bibfnamefont {N.~G.}\ \bibnamefont
  {{de Bruijn}}}\ and\ \bibinfo {author} {\bibfnamefont {P.}~\bibnamefont
  {Erd\H{o}s}},\ }\bibfield  {title} {\bibinfo {title} {{A Colour Problem for
  Infinite Graphs and a Problem in the Theory of Relations}},\ }\href
  {https://doi.org/10.1016/S1385-7258(51)50053-7} {\bibfield  {journal}
  {\bibinfo  {journal} {Indagationes Mathematicae (Proceedings)}\ }\textbf
  {\bibinfo {volume} {54}},\ \bibinfo {pages} {371--373} (\bibinfo {year}
  {1951})}\BibitemShut {NoStop}%
\bibitem [{\citenamefont {Bos{\'a}k}(1972)}]{bosak1972chromatic}%
  \BibitemOpen
  \bibfield  {author} {\bibinfo {author} {\bibfnamefont {J.}~\bibnamefont
  {Bos{\'a}k}},\ }\bibfield  {title} {\bibinfo {title} {Chromatic index of
  finite and infinite graphs},\ }\href@noop {} {\bibfield  {journal} {\bibinfo
  {journal} {Czechoslovak Mathematical Journal}\ }\textbf {\bibinfo {volume}
  {22}},\ \bibinfo {pages} {272--290} (\bibinfo {year} {1972})}\BibitemShut
  {NoStop}%
\bibitem [{\citenamefont {Neumann}(1954)}]{neumann1954embedding}%
  \BibitemOpen
  \bibfield  {author} {\bibinfo {author} {\bibfnamefont {B.~H.}\ \bibnamefont
  {Neumann}},\ }\bibfield  {title} {\bibinfo {title} {{An Embedding Theorem for
  Algebric Systems}},\ }\href {https://doi.org/10.1112/plms/s3-4.1.138}
  {\bibfield  {journal} {\bibinfo  {journal} {Proceedings of the London
  Mathematical Society}\ }\textbf {\bibinfo {volume} {s3-4}},\ \bibinfo {pages}
  {138--153} (\bibinfo {year} {1954})}\BibitemShut {NoStop}%
\bibitem [{\citenamefont {Parviainen}\ and\ \citenamefont
  {Wierman}(2005)}]{parviainen2005inclusions}%
  \BibitemOpen
  \bibfield  {author} {\bibinfo {author} {\bibfnamefont {R.}~\bibnamefont
  {Parviainen}}\ and\ \bibinfo {author} {\bibfnamefont {J.~C.}\ \bibnamefont
  {Wierman}},\ }\bibfield  {title} {\bibinfo {title} {{Inclusions and
  non-inclusions among the Archimedean and Laves lattices, with applications to
  bond percolation thresholds}},\ }\href@noop {} {\bibfield  {journal}
  {\bibinfo  {journal} {Congressus Numerantium}\ }\textbf {\bibinfo {volume}
  {176}},\ \bibinfo {pages} {89--128} (\bibinfo {year} {2005})}\BibitemShut
  {NoStop}%
\bibitem [{\citenamefont {Lloyd}(1996)}]{lloyd1996universal}%
  \BibitemOpen
  \bibfield  {author} {\bibinfo {author} {\bibfnamefont {S.}~\bibnamefont
  {Lloyd}},\ }\bibfield  {title} {\bibinfo {title} {Universal quantum
  simulators},\ }\href {https://doi.org/10.1126%2Fscience.273.5278.1073}
  {\bibfield  {journal} {\bibinfo  {journal} {Science}\ }\textbf {\bibinfo
  {volume} {273}},\ \bibinfo {pages} {1073--1078} (\bibinfo {year}
  {1996})}\BibitemShut {NoStop}%
\bibitem [{\citenamefont {Childs}\ \emph {et~al.}(2021)\citenamefont {Childs},
  \citenamefont {Su}, \citenamefont {Tran}, \citenamefont {Wiebe},\ and\
  \citenamefont {Zhu}}]{childs2021theory}%
  \BibitemOpen
  \bibfield  {author} {\bibinfo {author} {\bibfnamefont {A.~M.}\ \bibnamefont
  {Childs}}, \bibinfo {author} {\bibfnamefont {Y.}~\bibnamefont {Su}}, \bibinfo
  {author} {\bibfnamefont {M.~C.}\ \bibnamefont {Tran}}, \bibinfo {author}
  {\bibfnamefont {N.}~\bibnamefont {Wiebe}},\ and\ \bibinfo {author}
  {\bibfnamefont {S.}~\bibnamefont {Zhu}},\ }\bibfield  {title} {\bibinfo
  {title} {{Theory of Trotter Error with Commutator Scaling}},\ }\href
  {https://doi.org/10.1103/PhysRevX.11.011020} {\bibfield  {journal} {\bibinfo
  {journal} {Phys. Rev. X}\ }\textbf {\bibinfo {volume} {11}},\ \bibinfo
  {pages} {011020} (\bibinfo {year} {2021})}\BibitemShut {NoStop}%
\bibitem [{\citenamefont {de~Leon}\ \emph {et~al.}(2021)\citenamefont
  {de~Leon}, \citenamefont {Itoh}, \citenamefont {Kim}, \citenamefont {Mehta},
  \citenamefont {Northup}, \citenamefont {Paik}, \citenamefont {Palmer},
  \citenamefont {Samarth}, \citenamefont {Sangtawesin},\ and\ \citenamefont
  {Steuerman}}]{leon2021materials}%
  \BibitemOpen
  \bibfield  {author} {\bibinfo {author} {\bibfnamefont {N.~P.}\ \bibnamefont
  {de~Leon}}, \bibinfo {author} {\bibfnamefont {K.~M.}\ \bibnamefont {Itoh}},
  \bibinfo {author} {\bibfnamefont {D.}~\bibnamefont {Kim}}, \bibinfo {author}
  {\bibfnamefont {K.~K.}\ \bibnamefont {Mehta}}, \bibinfo {author}
  {\bibfnamefont {T.~E.}\ \bibnamefont {Northup}}, \bibinfo {author}
  {\bibfnamefont {H.}~\bibnamefont {Paik}}, \bibinfo {author} {\bibfnamefont
  {B.}~\bibnamefont {Palmer}}, \bibinfo {author} {\bibfnamefont
  {N.}~\bibnamefont {Samarth}}, \bibinfo {author} {\bibfnamefont
  {S.}~\bibnamefont {Sangtawesin}},\ and\ \bibinfo {author} {\bibfnamefont
  {D.~W.}\ \bibnamefont {Steuerman}},\ }\bibfield  {title} {\bibinfo {title}
  {Materials challenges and opportunities for quantum computing hardware},\
  }\href {https://doi.org/10.1126/science.abb2823} {\bibfield  {journal}
  {\bibinfo  {journal} {Science}\ }\textbf {\bibinfo {volume} {372}},\ \bibinfo
  {pages} {eabb2823} (\bibinfo {year} {2021})}\BibitemShut {NoStop}%
\bibitem [{\citenamefont {Nielsen}\ and\ \citenamefont
  {Chuang}(2010)}]{nielsen2010quantum}%
  \BibitemOpen
  \bibfield  {author} {\bibinfo {author} {\bibfnamefont {M.~A.}\ \bibnamefont
  {Nielsen}}\ and\ \bibinfo {author} {\bibfnamefont {I.~L.}\ \bibnamefont
  {Chuang}},\ }\href {https://doi.org/10.1017/CBO9780511976667} {\emph
  {\bibinfo {title} {{Quantum Computation and Quantum Information: 10th
  Anniversary Edition}}}}\ (\bibinfo  {publisher} {{Cambridge University
  Press}},\ \bibinfo {year} {2010})\BibitemShut {NoStop}%
\bibitem [{\citenamefont {Kattem\"olle}\ and\ \citenamefont {van
  Wezel}(2022)}]{kattemolle2022variational}%
  \BibitemOpen
  \bibfield  {author} {\bibinfo {author} {\bibfnamefont {J.}~\bibnamefont
  {Kattem\"olle}}\ and\ \bibinfo {author} {\bibfnamefont {J.}~\bibnamefont {van
  Wezel}},\ }\bibfield  {title} {\bibinfo {title} {{Variational quantum
  eigensolver for the Heisenberg antiferromagnet on the kagome lattice}},\
  }\href {https://doi.org/10.1103/PhysRevB.106.214429} {\bibfield  {journal}
  {\bibinfo  {journal} {Phys. Rev. B}\ }\textbf {\bibinfo {volume} {106}},\
  \bibinfo {pages} {214429} (\bibinfo {year} {2022})}\BibitemShut {NoStop}%
\bibitem [{\citenamefont {Bosse}\ and\ \citenamefont
  {Montanaro}(2022)}]{bosse2022probing}%
  \BibitemOpen
  \bibfield  {author} {\bibinfo {author} {\bibfnamefont {J.~L.}\ \bibnamefont
  {Bosse}}\ and\ \bibinfo {author} {\bibfnamefont {A.}~\bibnamefont
  {Montanaro}},\ }\bibfield  {title} {\bibinfo {title} {{Probing ground-state
  properties of the kagome antiferromagnetic Heisenberg model using the
  variational quantum eigensolver}},\ }\href
  {https://doi.org/10.1103/PhysRevB.105.094409} {\bibfield  {journal} {\bibinfo
   {journal} {Phys. Rev. B}\ }\textbf {\bibinfo {volume} {105}},\ \bibinfo
  {pages} {094409} (\bibinfo {year} {2022})}\BibitemShut {NoStop}%
\bibitem [{\citenamefont {{Kattem{\"o}lle}}\ and\ \citenamefont
  {{Hariharan}}(2023)}]{kattemolle2023line}%
  \BibitemOpen
  \bibfield  {author} {\bibinfo {author} {\bibfnamefont {J.}~\bibnamefont
  {{Kattem{\"o}lle}}}\ and\ \bibinfo {author} {\bibfnamefont {S.}~\bibnamefont
  {{Hariharan}}},\ }\href@noop {} {\bibinfo {title} {{Line-graph qubit routing:
  from kagome to heavy-hex and more}}} (\bibinfo {year} {2023}),\ \Eprint
  {https://arxiv.org/abs/2306.05385} {arXiv:2306.05385 [quant-ph]} \BibitemShut
  {NoStop}%
\bibitem [{\citenamefont {{Burkard}}(2022)}]{burkard2022recipes}%
  \BibitemOpen
  \bibfield  {author} {\bibinfo {author} {\bibfnamefont {G.}~\bibnamefont
  {{Burkard}}},\ }\href@noop {} {\bibinfo {title} {{Recipes for the Digital
  Quantum Simulation of Lattice Spin Systems}}} (\bibinfo {year} {2022}),\
  \Eprint {https://arxiv.org/abs/2209.07918} {arXiv:2209.07918 [quant-ph]}
  \BibitemShut {NoStop}%
\bibitem [{\citenamefont {Kim}\ \emph {et~al.}(2023)\citenamefont {Kim},
  \citenamefont {Eddins}, \citenamefont {Anand}, \citenamefont {Wei},
  \citenamefont {van~den Berg}, \citenamefont {Rosenblatt}, \citenamefont
  {Nayfeh}, \citenamefont {Wu}, \citenamefont {Zaletel}, \citenamefont
  {Temme},\ and\ \citenamefont {Kandala}}]{youngseok2023evidence}%
  \BibitemOpen
  \bibfield  {author} {\bibinfo {author} {\bibfnamefont {Y.}~\bibnamefont
  {Kim}}, \bibinfo {author} {\bibfnamefont {A.}~\bibnamefont {Eddins}},
  \bibinfo {author} {\bibfnamefont {S.}~\bibnamefont {Anand}}, \bibinfo
  {author} {\bibfnamefont {K.~X.}\ \bibnamefont {Wei}}, \bibinfo {author}
  {\bibfnamefont {E.}~\bibnamefont {van~den Berg}}, \bibinfo {author}
  {\bibfnamefont {S.}~\bibnamefont {Rosenblatt}}, \bibinfo {author}
  {\bibfnamefont {H.}~\bibnamefont {Nayfeh}}, \bibinfo {author} {\bibfnamefont
  {Y.}~\bibnamefont {Wu}}, \bibinfo {author} {\bibfnamefont {M.}~\bibnamefont
  {Zaletel}}, \bibinfo {author} {\bibfnamefont {K.}~\bibnamefont {Temme}},\
  and\ \bibinfo {author} {\bibfnamefont {A.}~\bibnamefont {Kandala}},\
  }\bibfield  {title} {\bibinfo {title} {Evidence for the utility of quantum
  computing before fault tolerance},\ }\href
  {https://doi.org/10.1038/s41586-023-06096-3} {\bibfield  {journal} {\bibinfo
  {journal} {Nature}\ }\textbf {\bibinfo {volume} {618}},\ \bibinfo {pages}
  {500--505} (\bibinfo {year} {2023})}\BibitemShut {NoStop}%
\bibitem [{\citenamefont {{Zhu}}\ \emph {et~al.}(2022)\citenamefont {{Zhu}},
  \citenamefont {{Li}},\ and\ \citenamefont {{Chen}}}]{zhu2022efficient}%
  \BibitemOpen
  \bibfield  {author} {\bibinfo {author} {\bibfnamefont {H.}~\bibnamefont
  {{Zhu}}}, \bibinfo {author} {\bibfnamefont {Y.}~\bibnamefont {{Li}}},\ and\
  \bibinfo {author} {\bibfnamefont {T.}~\bibnamefont {{Chen}}},\ }\href@noop {}
  {\bibinfo {title} {{Efficient Verification of Ground States of
  Frustration{-}Free Hamiltonians}}} (\bibinfo {year} {2022}),\ \Eprint
  {https://arxiv.org/abs/2206.15292} {arXiv:2206.15292 [quant-ph]} \BibitemShut
  {NoStop}%
\bibitem [{\citenamefont {Chen}\ \emph {et~al.}(2023)\citenamefont {Chen},
  \citenamefont {Li},\ and\ \citenamefont {Zhu}}]{chen2023efficient}%
  \BibitemOpen
  \bibfield  {author} {\bibinfo {author} {\bibfnamefont {T.}~\bibnamefont
  {Chen}}, \bibinfo {author} {\bibfnamefont {Y.}~\bibnamefont {Li}},\ and\
  \bibinfo {author} {\bibfnamefont {H.}~\bibnamefont {Zhu}},\ }\bibfield
  {title} {\bibinfo {title} {{Efficient verification of
  Affleck{-}Kennedy{-}Lieb{-}Tasaki states}},\ }\href
  {https://doi.org/10.1103/PhysRevA.107.022616} {\bibfield  {journal} {\bibinfo
   {journal} {Phys. Rev. A}\ }\textbf {\bibinfo {volume} {107}},\ \bibinfo
  {pages} {022616} (\bibinfo {year} {2023})}\BibitemShut {NoStop}%
\bibitem [{\citenamefont {{Farhi}}\ \emph {et~al.}(2014)\citenamefont
  {{Farhi}}, \citenamefont {{Goldstone}},\ and\ \citenamefont
  {{Gutmann}}}]{fahri2014quantum}%
  \BibitemOpen
  \bibfield  {author} {\bibinfo {author} {\bibfnamefont {E.}~\bibnamefont
  {{Farhi}}}, \bibinfo {author} {\bibfnamefont {J.}~\bibnamefont
  {{Goldstone}}},\ and\ \bibinfo {author} {\bibfnamefont {S.}~\bibnamefont
  {{Gutmann}}},\ }\href@noop {} {\bibinfo {title} {{A Quantum Approximate
  Optimization Algorithm}}} (\bibinfo {year} {2014}),\ \Eprint
  {https://arxiv.org/abs/1411.4028} {arXiv:1411.4028 [quant-ph]} \BibitemShut
  {NoStop}%
\bibitem [{\citenamefont {{Majumdar}}\ \emph {et~al.}(2021)\citenamefont
  {{Majumdar}}, \citenamefont {{Madan}}, \citenamefont {{Bhoumik}},
  \citenamefont {{Vinayagamurthy}}, \citenamefont {{Raghunathan}},\ and\
  \citenamefont {{Sur-Kolay}}}]{majumdar2021optimizing}%
  \BibitemOpen
  \bibfield  {author} {\bibinfo {author} {\bibfnamefont {R.}~\bibnamefont
  {{Majumdar}}}, \bibinfo {author} {\bibfnamefont {D.}~\bibnamefont {{Madan}}},
  \bibinfo {author} {\bibfnamefont {D.}~\bibnamefont {{Bhoumik}}}, \bibinfo
  {author} {\bibfnamefont {D.}~\bibnamefont {{Vinayagamurthy}}}, \bibinfo
  {author} {\bibfnamefont {S.}~\bibnamefont {{Raghunathan}}},\ and\ \bibinfo
  {author} {\bibfnamefont {S.}~\bibnamefont {{Sur-Kolay}}},\ }\href@noop {}
  {\bibinfo {title} {{Optimizing Ansatz Design in QAOA for Max-cut}}} (\bibinfo
  {year} {2021}),\ \Eprint {https://arxiv.org/abs/2106.02812} {arXiv:2106.02812
  [quant-ph]} \BibitemShut {NoStop}%
\bibitem [{\citenamefont {Herrman}\ \emph {et~al.}(2021)\citenamefont
  {Herrman}, \citenamefont {Ostrowski}, \citenamefont {Humble},\ and\
  \citenamefont {Siopsis}}]{herrman2021lower}%
  \BibitemOpen
  \bibfield  {author} {\bibinfo {author} {\bibfnamefont {R.}~\bibnamefont
  {Herrman}}, \bibinfo {author} {\bibfnamefont {J.}~\bibnamefont {Ostrowski}},
  \bibinfo {author} {\bibfnamefont {T.~S.}\ \bibnamefont {Humble}},\ and\
  \bibinfo {author} {\bibfnamefont {G.}~\bibnamefont {Siopsis}},\ }\bibfield
  {title} {\bibinfo {title} {Lower bounds on circuit depth of the quantum
  approximate optimization algorithm},\ }\href
  {https://doi.org/10.1007/s11128-021-03001-7} {\bibfield  {journal} {\bibinfo
  {journal} {Quantum Information Processing}\ }\textbf {\bibinfo {volume}
  {20}},\ \bibinfo {pages} {59} (\bibinfo {year} {2021})}\BibitemShut {NoStop}%
\bibitem [{\citenamefont {Harrigan}\ \emph {et~al.}(2021)\citenamefont
  {Harrigan}, \citenamefont {Sung}, \citenamefont {Neeley}, \citenamefont
  {Satzinger}, \citenamefont {Arute}, \citenamefont {Arya}, \citenamefont
  {Atalaya}, \citenamefont {Bardin}, \citenamefont {Barends}, \citenamefont
  {Boixo} \emph {et~al.}}]{harrigan2021quantum}%
  \BibitemOpen
  \bibfield  {author} {\bibinfo {author} {\bibfnamefont {M.~P.}\ \bibnamefont
  {Harrigan}}, \bibinfo {author} {\bibfnamefont {K.~J.}\ \bibnamefont {Sung}},
  \bibinfo {author} {\bibfnamefont {M.}~\bibnamefont {Neeley}}, \bibinfo
  {author} {\bibfnamefont {K.~J.}\ \bibnamefont {Satzinger}}, \bibinfo {author}
  {\bibfnamefont {F.}~\bibnamefont {Arute}}, \bibinfo {author} {\bibfnamefont
  {K.}~\bibnamefont {Arya}}, \bibinfo {author} {\bibfnamefont {J.}~\bibnamefont
  {Atalaya}}, \bibinfo {author} {\bibfnamefont {J.~C.}\ \bibnamefont {Bardin}},
  \bibinfo {author} {\bibfnamefont {R.}~\bibnamefont {Barends}}, \bibinfo
  {author} {\bibfnamefont {S.}~\bibnamefont {Boixo}}, \emph {et~al.},\
  }\bibfield  {title} {\bibinfo {title} {Quantum approximate optimization of
  non-planar graph problems on a planar superconducting processor},\ }\href
  {https://doi.org/10.1038/s41567-020-01105-y} {\bibfield  {journal} {\bibinfo
  {journal} {Nature Physics}\ }\textbf {\bibinfo {volume} {17}},\ \bibinfo
  {pages} {332--336} (\bibinfo {year} {2021})}\BibitemShut {NoStop}%
\bibitem [{\citenamefont {Weidenfeller}\ \emph {et~al.}(2022)\citenamefont
  {Weidenfeller}, \citenamefont {Valor}, \citenamefont {Gacon}, \citenamefont
  {Tornow}, \citenamefont {Bello}, \citenamefont {Woerner},\ and\ \citenamefont
  {Egger}}]{weidenfeller2022scaling}%
  \BibitemOpen
  \bibfield  {author} {\bibinfo {author} {\bibfnamefont {J.}~\bibnamefont
  {Weidenfeller}}, \bibinfo {author} {\bibfnamefont {L.~C.}\ \bibnamefont
  {Valor}}, \bibinfo {author} {\bibfnamefont {J.}~\bibnamefont {Gacon}},
  \bibinfo {author} {\bibfnamefont {C.}~\bibnamefont {Tornow}}, \bibinfo
  {author} {\bibfnamefont {L.}~\bibnamefont {Bello}}, \bibinfo {author}
  {\bibfnamefont {S.}~\bibnamefont {Woerner}},\ and\ \bibinfo {author}
  {\bibfnamefont {D.~J.}\ \bibnamefont {Egger}},\ }\bibfield  {title} {\bibinfo
  {title} {Scaling of the quantum approximate optimization algorithm on
  superconducting qubit based hardware},\ }\href
  {https://doi.org/10.22331/q-2022-12-07-870} {\bibfield  {journal} {\bibinfo
  {journal} {{Quantum}}\ }\textbf {\bibinfo {volume} {6}},\ \bibinfo {pages}
  {870} (\bibinfo {year} {2022})}\BibitemShut {NoStop}%
\bibitem [{\citenamefont {Gražulis}\ \emph {et~al.}(2012)\citenamefont
  {Gražulis}, \citenamefont {Daškevič}, \citenamefont {Merkys},
  \citenamefont {Chateigner}, \citenamefont {Lutterotti}, \citenamefont
  {Quirós}, \citenamefont {Serebryanaya}, \citenamefont {Moeck}, \citenamefont
  {Downs},\ and\ \citenamefont {Le~Bail}}]{grazulis2012crystallography}%
  \BibitemOpen
  \bibfield  {author} {\bibinfo {author} {\bibfnamefont {S.}~\bibnamefont
  {Gražulis}}, \bibinfo {author} {\bibfnamefont {A.}~\bibnamefont
  {Daškevič}}, \bibinfo {author} {\bibfnamefont {A.}~\bibnamefont {Merkys}},
  \bibinfo {author} {\bibfnamefont {D.}~\bibnamefont {Chateigner}}, \bibinfo
  {author} {\bibfnamefont {L.}~\bibnamefont {Lutterotti}}, \bibinfo {author}
  {\bibfnamefont {M.}~\bibnamefont {Quirós}}, \bibinfo {author} {\bibfnamefont
  {N.~R.}\ \bibnamefont {Serebryanaya}}, \bibinfo {author} {\bibfnamefont
  {P.}~\bibnamefont {Moeck}}, \bibinfo {author} {\bibfnamefont {R.~T.}\
  \bibnamefont {Downs}},\ and\ \bibinfo {author} {\bibfnamefont
  {A.}~\bibnamefont {Le~Bail}},\ }\bibfield  {title} {\bibinfo {title}
  {{Crystallography Open Database (COD): an open-access collection of crystal
  structures and platform for world-wide collaboration}},\ }\href
  {https://doi.org/10.1093/nar/gkr900} {\bibfield  {journal} {\bibinfo
  {journal} {Nucleic Acids Research}\ }\textbf {\bibinfo {volume} {40}},\
  \bibinfo {pages} {D420--D427} (\bibinfo {year} {2012})}\BibitemShut {NoStop}%
\bibitem [{\citenamefont {Downs}\ and\ \citenamefont
  {Hall-Wallace}(2003)}]{downs2003american}%
  \BibitemOpen
  \bibfield  {author} {\bibinfo {author} {\bibfnamefont {R.~T.}\ \bibnamefont
  {Downs}}\ and\ \bibinfo {author} {\bibfnamefont {M.}~\bibnamefont
  {Hall-Wallace}},\ }\bibfield  {title} {\bibinfo {title} {{The American
  Mineralogist Crystal Structure Database}},\ }\href@noop {} {\bibfield
  {journal} {\bibinfo  {journal} {American Mineralogist}\ }\textbf {\bibinfo
  {volume} {88}},\ \bibinfo {pages} {247--250} (\bibinfo {year}
  {2003})}\BibitemShut {NoStop}%
\bibitem [{\citenamefont {Cormen}\ \emph {et~al.}(2009)\citenamefont {Cormen},
  \citenamefont {Leiserson}, \citenamefont {Rivest},\ and\ \citenamefont
  {Stein}}]{cormen2009algorithms}%
  \BibitemOpen
  \bibfield  {author} {\bibinfo {author} {\bibfnamefont {T.~H.}\ \bibnamefont
  {Cormen}}, \bibinfo {author} {\bibfnamefont {C.~E.}\ \bibnamefont
  {Leiserson}}, \bibinfo {author} {\bibfnamefont {R.~L.}\ \bibnamefont
  {Rivest}},\ and\ \bibinfo {author} {\bibfnamefont {C.}~\bibnamefont
  {Stein}},\ }\href {http://mitpress.mit.edu/books/introduction-algorithms}
  {\emph {\bibinfo {title} {{Introduction to Algorithms}}}}\ (\bibinfo
  {publisher} {{MIT} Press},\ \bibinfo {year} {2009})\BibitemShut {NoStop}%
\end{thebibliography}%

\appendix

\section{Proofs}\label{sec:proofs}
Here, we show the proofs for the theorems and lemmas in \sec{correctness} and \sec{running_time}, together with additional lemmas. The \emph{color class $c(C)$} of an edge colored graph $C$ is a set consisting of all edges with color $c$ \cite{fiorini1977edge}. (Not to be confused with Vizing's graph classes.) A matching of a multigraph is a set of edges containing no self-loops and where every vertex appears in at most one edge. If $C$ contains no self-loops and is properly edge colored, then the edge color class $c(C)$ forms a matching of $C$.

In the two lemmas below, let $B$ be a basis graph, with seed numbers $S$, that induces a lattice graph $\mc G$. By definition, $B$ is finite, simple, contains no redundant edges, and does not contain edges from nonseeds to nonseeds. Let $\tilde B=W(B)$ be the wrapped basis graph,\emph{assume $\tilde B$ is self-loop-free}, let $\tilde C$  be a proper edge coloring of $\tilde B$, and let $C=U(\tilde C)$ be the unwrapped $\tilde C$. Let $\mc C$ be the edge colored lattice graph induced by $C$. Because $C$ is an edge coloring of $B$, $\mc C$ is an edge coloring of $\mc G$. We have the following lemmas.

\begin{lemma}\label{thm:unique_seed}
  For every color $c$ of $C$, every seed number $s\in S$ appears in at most once in the color class $c(C)$.
\end{lemma}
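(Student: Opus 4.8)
The plan is to transport the statement back to the wrapped basis graph $\tilde B$, where the self-loop-free hypothesis together with the properness of $\tilde C$ directly forbid two equally colored edges from meeting at a single vertex. The starting observation is that for a basis graph the patch size is $n=m=1$, so $w[(dx,dy,s)]=(dx\mod 1,\,dy\mod 1,\,s)=(0,0,s)$. Hence $w$ preserves the seed number $s$ of every vertex and collapses all vertices sharing seed number $s$ onto the single seed $(0,0,s)\in V(\tilde B)$. Consequently, an endpoint of an edge of $C$ carries seed number $s$ if and only if, after wrapping, the corresponding endpoint of the wrapped edge is the vertex $(0,0,s)$.

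Next I would invoke the fact recorded in the preamble to this appendix that, because $\tilde B$ is self-loop-free and $\tilde C$ is a proper edge coloring, each color class $c(\tilde C)$ is a matching of $\tilde B$; in particular, the vertex $(0,0,s)$ lies in at most one edge of color $c$. I would also note that unwrapping $U$ is a color-preserving bijection between the edges of $\tilde C$ and those of $C$, with inverse $W$, so the edges of color $c$ in $C$ correspond one-to-one with those of color $c$ in $\tilde C$, the seed numbers of their endpoints being preserved throughout.

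The heart of the argument is a proof by contradiction. Suppose seed number $s$ occurred at least twice among the endpoints of the color class $c(C)$, witnessed by two occurrences $(v_1,e_1)$ and $(v_2,e_2)$ with $e_1,e_2\in c(C)$ and both $v_1,v_2$ of seed number $s$. If $e_1=e_2=\{v_1,v_2\}$, then $w(v_1)=w(v_2)=(0,0,s)$, so the wrapped edge $W(e_1)$ is a self-loop at $(0,0,s)$, contradicting that $\tilde B$ is self-loop-free. If $e_1\neq e_2$, then $W(e_1)$ and $W(e_2)$ are distinct edges of $\tilde B$ (wrapping is injective because it retains the original edge as a label), both of color $c$ and both incident on the vertex $(0,0,s)=w(v_1)=w(v_2)$; this contradicts that $c(\tilde C)$ is a matching. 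Either way we reach a contradiction, so $s$ appears at most once, as claimed.

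I expect the only real subtlety to be the bookkeeping of this case distinction, specifically isolating exactly where each hypothesis is used: the self-loop-free property rules out the degenerate case of a single edge whose \emph{both} endpoints have seed number $s$, whereas properness of $\tilde C$ rules out two distinct such edges. The point to handle with care is confirming that these two failure modes are exhaustive and that collapsing under $w$ does not silently identify the two edges in the second case — which is guaranteed precisely by the injectivity of $W$.
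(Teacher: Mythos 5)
Your proof is correct and follows essentially the same route as the paper's: split into the case where seed number $s$ appears twice within a single edge (ruled out because the wrapped edge would be a self-loop of $\tilde B$) and the case where it appears in two distinct edges of $c(C)$ (ruled out because $(0,0,s)$ would then lie in two equally colored edges of $\tilde C$, contradicting properness). Your explicit appeal to the injectivity of $W$ to keep the two edges distinct after wrapping is a small clarification the paper leaves implicit, but not a different argument.
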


\begin{proof}
  For all edges in $c(C)$, every seed number $s\in S$ can appear in at most one vertex of that edge because if, on the contrary, there is an edge $\{(dx,dy,s),(dx',dy',s)\}\in c(C)$, we have $\{(dx,dy,s),(dx',dy',s)\}\in E(B)$ which violates the assumption that $\tilde B$ is self-loop-free. To show that $s\in S$ can appear in at most one edge of $c(C)$, assume, on the contrary, that there exists edges $e_1,e_2\in c(C)$ with $(dx,dy,s)\in e_1$, $(dx',dy',s)\in e_2$ and $e_1\neq e_2$. Then, the vertex $(0,0,s)$ must have appeared in two edges of $c(\tilde C)$, in contradiction with the assumption that $\tilde C$ is a proper edge coloring.
\end{proof}

\begin{lemma}\label{thm:matching}
  For every color $c$ of $C$, the color class $c(\mc C)$ forms a matching of $\mc C$.
  \end{lemma}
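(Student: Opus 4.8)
The plan is to lift \lem{unique_seed}, which already controls the color classes of the finite coloring $C$, to the infinite coloring $\mc C$ by exploiting that $\mc C$ is simply the union of the integer translates of $C$ with colors carried along. Concretely, specializing \eq{edge_colored_lattice_graph} to $n=m=1$, the color class $c(\mc C)$ is exactly the set of edges $\{v+(x,y,0),v'+(x,y,0)\}$ with $\{v,v'\}\in c(C)$ and $(x,y)\in\mathbb Z^2$, since translation by $(x,y,0)$ leaves the assigned color unchanged. The single structural fact I would lean on throughout is that such a translation acts only on the cell coordinates and preserves the seed number of every vertex.

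First I would dispose of self-loops. Every edge of $\mc C$ descends from an edge $\{v,v'\}\in E(B)$, and since $B$ is simple we have $v\neq v'$; translation preserves distinctness of endpoints, so no edge of $c(\mc C)$ is a self-loop (equivalently, $\mc G$ is simple). It then remains to show that no vertex of $\mc C$ is incident on two distinct edges of $c(\mc C)$, which is the substantive half of the matching property.

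For this I would argue by contradiction. Suppose a vertex $u=(dx,dy,s)$ lies in two distinct edges $e_1,e_2\in c(\mc C)$. Each $e_i$ is a translate $f_i+(x_i,y_i,0)$ of some $f_i\in c(C)$, and the preimage of $u$ inside $f_i$ is a vertex whose seed number is $s$, because seed numbers are translation invariant. Hence both $f_1$ and $f_2$ contain a vertex of seed number $s$. By \lem{unique_seed}, the seed number $s$ occurs at most once over the entire color class $c(C)$; therefore $f_1=f_2=:f$, and the vertex $w\in f$ with seed number $s$ is common to both. But then $u=w+(x_1,y_1,0)=w+(x_2,y_2,0)$ forces $(x_1,y_1)=(x_2,y_2)$, whence $e_1=f+(x_1,y_1,0)=f+(x_2,y_2,0)=e_2$, contradicting $e_1\neq e_2$. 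This establishes that each vertex lies in at most one color-$c$ edge, so $c(\mc C)$ is a matching.

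The main obstacle is purely in the bookkeeping that makes the translation vector recoverable from $u$ alone: one must verify that matching the cell coordinates $(dx,dy)$ of $u$ against those of the \emph{unique} seed-$s$ vertex $w$ pins down $(x,y)$ uniquely, and that \lem{unique_seed} is read as ``at most one vertex of seed number $s$ across the whole color class $c(C)$,'' rather than merely ``at most one per edge.'' Once these two points are secured, uniqueness of the contributing translate collapses the two assumed edges into one, and the matching property of $c(\mc C)$ follows immediately.
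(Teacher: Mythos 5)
Your proof is correct and follows essentially the same route as the paper's: both reduce the matching property of $c(\mc C)$ to \lem{unique_seed} via the translation invariance of seed numbers and argue by contradiction from a vertex shared by two color-$c$ edges. The only difference is that you explicitly treat the degenerate case in which the two offending edges are distinct translates of the \emph{same} edge of $c(C)$ (pinning down the translation vector from the cell coordinates of the shared vertex), a case the paper's proof passes over silently; this is a small tightening rather than a different approach.
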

\begin{proof}
  The lattice graph $\mc G$ is self-loop-free and hence $c(\mc C)$ is self-loop-free for all $c$. To show that, for all $c$, every vertex appears in at most one edge of $c(\mc C)$, assume, on the contrary, that there exist a color $c$ and a vertex that appears in two edges $e_1',e_2'$ of $c(\mc C)$. By definition of $\mc C$, these edges can be translated back to edges $e_1,e_2$ in $C$. Translation of edges conserves the color of those edges so that $e_1$ and $e_2$ are in $c(C)$. For $e_1'$ and $e_2'$ to share a vertex, it is necessary that the same seed number $s\in S$ appears in a vertex in $e_1'$ and $e_2'$. Since translation of an edge also conserves the seed number in each vertex of that edge, $s$ appears both in a vertex of $e_1\in c(C)$ and $e_2\in c(C)$, in contradiction with \lem{unique_seed}. Therefore, there is no vertex that appears in two edges of $c(\mc C)$ for all $c$.
\end{proof}

In the proof of \thm{proper_edge_coloring} we use a technique we call \emph{reseeding}. The patch $P_{n,m}(B)$, with $B$ a basis graph that induces $\mc G$, may be mapped by an isomorphism $I$ to a basis graph $B'$ that induces a lattice graph $\mc G'$ isomorphic to $\mc G$. Explicitly, this isomorphism $I$ may be constructed with the following steps. Assign a unique seed number $s'$ to every vertex $(x,y,s)\in V(P_{n,m})$ for which $0\leq x<n$ and $0\leq y<m$ and map $(x,y,s)$ to $(0,0,s')$. The vertices of the latter form are the seeds of $B'$. To obtain the nonseeds of $B'$, map the vertices $(x,y,s)\in V(P_{n,m})$ for which not $0\leq x<n$ or not $0\leq y<m$ to $(\lfloor x/n \rfloor,\lfloor y/m \rfloor,s')$, with $s'$ the new seed number of the seed $(x \mod n, y \mod m,s)$. Map all edges of $P_{n,m}(B)$ accordingly to obtain $B'$. The same isomorphism $I$ naturally extends to an isomorphism between an edge coloring $C$ of $P_{n,m}(B)$ and a reseeded edge coloring $C'$ of $B'$. Furthermore, the edge coloring $C'$ induces an edge coloring $\mc C'$ of $\mc G'$ that is isomorphic to the edge coloring $\mc C$ of $\mc G$ induced by~$C$.

In the following theorem and lemma, let the basis graph $B$ induce the lattice graph $\mc G$. Let $P_{n,m}$ be a patch of $\mc G$ of $n$ by $m$ basis graphs, $\tilde P_{n,m}=W(P_{n,m})$ the wrapped patch, and $\tilde C$ a proper edge coloring of the wrapped patch. Let $C=U(\tilde C)$ be the unwrapped $\tilde C$. Let $\mc C$ be the edge colored lattice graph induced by $C$. It forms an edge coloring of $\mc G$.

\rethm{proper_edge_coloring}{
The edge colored lattice graph $\mc C$ is properly edge colored if and only if the wrapped patch $\tilde P_{n,m}$ is self-loop-free.
}
\begin{proof}
  We first prove the theorem assuming that $n=m=1$ and then extend this proof to arbitrary $(n,m)$ by reseeding.

  Assume that $\tilde P_{1,1}$ is self-loop-free. Then by \lem{matching}, for all colors $c$ of the resulting edge colored patch $C$, $c(\mc C)$ forms a matching of $\mc C$. Since every edge of $\mc C$ is colored, the union of $c(C)$ for all $c$ equals $E(\mc G)$. Therefore, $\mc C$ is properly edge colored.

  Now assume that that $\tilde P_{1,1}$ contains self-loops. Because $P_{1,1}$ is self-loop-free, the self-loops in $\tilde P_{1,1}$ must be caused by wrapping. A self-loop in  $P_{1,1}$ occurs by wrapping if and only if $P_{1,1}$ contains an edge $\{v,v'\}$ with $v=(dx,dy,s),v'=(dx',dy',s)$. Note $s$ appears in $v$ and $v'$. The coloring basis graph $C$ will contain the edge $e=(\{v,v'\},c)$ for some $c$. Therefore, $\mc C$ contains the edge $e$ and the translated edge $e'=(\{v+(x,y,0),v'+(x,y,0)\},c)$ with $x=dx'-dx\in \mathbb Z,y=dy'-dy\in \mathbb Z$. Note $v+(x,y,0)=v'$ so that $e'=(\{v',v''\},c)$ for some $v''$. It follows $\mc C$ is not a proper edge coloring.

  If $n\neq 1$ or $m\neq 1$, let $B'$ be the basis graph obtained by reseeding $P_{n,m}$ using the isomorphism $I$. Let $\tilde B'$ be the wrapped $B'$ and let $C'$ be the properly edge colored and unwrapped $\tilde B'$. Let $\mc C'$ be the edge colored lattice graph induced by $C'$. By the arguments above, the edge colored lattice graph $\mc C'$ is properly edge colored if and only if $\tilde B'$ is self-loop-free. The edge colored lattice graph $\mc C'$ is properly edge colored if and only if $\mc C$ is properly edge colored. Furthermore, $\tilde B'$ is self-loop-free if and only if $\tilde P_{n,m}(B)$ is self-loop-free.
\end{proof}

The following lemma shows that at stage~3 (`color') of \met{edge_coloring}, we may determine the maximum degree of a lattice graph $\mc G$ by computing the maximum degree of the wrapped patch $\tilde P_{n,m}$. Definitions are as in \thm{proper_edge_coloring}.

\relem{max_degree}{
   If the wrapped patch $\tilde P_{n,m}$ is self-loop-free, then the maximum degree of the wrapped patch equals the maximum degree of the lattice graph;  $\Delta(\tilde P_{n,m})=\Delta(\mc G)$.
}
\begin{proof}
  By reseeding, we may assume $P_{n,m}$ is a basis graph $B$. We will show that, if a wrapped basis graph $\tilde B$ is self-loop-free, then for every vertex $v\in V(\tilde B)$ there is a $\bar v \in V(\mc G)$ of equal degree, and vice versa, from which the lemma follows.

  Take any $v\in V(\tilde B)$. Then, there is an $s\in \mathbb N_0$ such that $v=(0,0,s)$. We may assume without loss of generality that $i)$ $v\in V(B)$ and that $ii)$ all edges involving the seed number $s$ in $E(B)$ are of the form $\{(0,0,s),(dx',dy',s')\}$, with $dx',dy',s'\in \mathbb Z$ depending on the edge, and with $s\neq s'$ (for otherwise $\tilde B$ would have a self-loop).

  Assumption $i)$ is without loss of generality because if $v\notin V(B)$, then there must be a $(dx,dy)\neq (0,0)$ such that $(dx,dy,s)\in V(B)$. If $v\in V(\tilde B)$ is an isolated vertex, it is also isolated in $B$ and $\mc G$. The basis graph $B$ can then be altered by replacing $(dx,dy,s)\in V(B)$ by $(0,0,s)$. This alters $B$ but not $\tilde B$. Most importantly, the redefined $B$ is fully equivalent to the old $B$ because these basis graphs induce identical lattice graphs $\mc G$. If $v=(0,0,s)$ is not isolated in $\tilde B$, there is some edge in $\tilde B$ containing the vertex $v=(0,0,s)$. Consequently, $\{(0,0,s),(dx',dy',s')\}\in E(B)$ or $\{(dx,dy,s),(0,0,s')\}\in E(B)$ for some $dx,dy,dx',dy',s\neq s'$ . Then, $v\in V(B)$ follows from assumption $ii)$. Assumption $ii)$ is without loss of generality because any edge of the only possible alternative form $\{(dx,dy,s),(0,0,s')\}$ can be translated so that it becomes of the assumed form. This alters $B$ but not $\tilde B$ (except for the old edge labels, which is of no importance here) nor $\mc G$.

  Pick any $v=(0,0,s)\in V(\tilde B)$. Then also $v\in V(B)$ by $i)$. Unwrapping does not attach any edges to seeds such as $v$. Because of $ii)$, unwrapping also does not detach edges from $v$. So, $\deg_{\tilde B}(v)=\deg_{B}(v)$. Since $v\in V(B)$, $v\in V(\mc G)$. By construction of lattice graphs, $v$ cannot be incident to less edges in $\mc G$ than in $B$. The vertex $v$ can also not be incident to more edges in $\mc G$ than in $B$. To show the latter, note that any excess edge incident on $v$ in $\mc G$ must arise from a translated copy of $B$. However, by $ii)$ the edges in the translated copies of $B$ do not contain $v$. Thus, $\deg_B(v)=\deg_{\mc G}(v)$ and hence $\deg_{\tilde B}(v)=\deg_{\mc G}(v)$.

  Now, pick any $v' \in V(\mc G)$. By construction, the lattice graph $\mc G$ can be translated by an isomorphism $T$ so that $T(v)=(0,0,s)$. Note $(0,0,s)=:\bar v$ must be a vertex in $\tilde B$. Thus, we can reuse the previous result that $\deg_{\tilde B}(\bar v)=\deg_{\mc G}(\bar v)$.
\end{proof}

For the basis graphs $B$, which are finite, define
\begin{equation}
  \begin{aligned}\label{eq:max}
  dx_{\max}&=\max\{\lvert dx \rvert\mid(dx,dy,s)\in E(B)\},\\ dy_{\max}&=\max\{\lvert dy \rvert\mid(dx,dy,s)\in E(B)\}.
  \end{aligned}
\end{equation}
The following is used to show \met{edge_coloring} halts in case a proper edge coloring is requested ($t=3$).

\begin{lemma}\label{thm:self-loop-free}
  For any lattice graph $\mc G$ induced by a basis graph $B$, the wrapped patch $\tilde P_{n,m}(B)$, with $n\geq dx_{\mrm{max}}+1$, $m\geq dy_{\mrm{max}}+1$, is self-loop-free.
\end{lemma}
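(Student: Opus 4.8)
The plan is to argue by contradiction: I assume $\tilde P_{n,m}(B)$ contains a self-loop and show that this forces $B$ itself to contain a self-loop, contradicting the fact (from \defi{basis_graph}) that $B$ is simple. The first step is to trace the self-loop back to a single edge of the unwrapped patch. Since $W$ acts edge-by-edge and neither discards self-loops nor merges edges, any self-loop of $\tilde P_{n,m}$ is the image $W(\{u,u'\})$ of some edge $\{u,u'\}\in E(P_{n,m})$ with $w(u)=w(u')$. By \eq{patch}, every edge of $P_{n,m}$ is a translate of an edge $\{v,v'\}\in E(B)$ by a shift $(x,y,0)$ with $(x,y)\in\mathbb Z_n\times\mathbb Z_m$, so I write $v=(a_1,b_1,s_1)$, $v'=(a_2,b_2,s_2)$, and $u=v+(x,y,0)$, $u'=v'+(x,y,0)$.

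Next I would unfold $w(u)=w(u')$ coordinatewise. Because $(r+x)\bmod n=(r'+x)\bmod n$ iff $r\equiv r'\pmod n$, the common shift $(x,y)$ cancels, and $w(u)=w(u')$ reduces to the three conditions $s_1=s_2$, $a_1\equiv a_2\pmod n$, and $b_1\equiv b_2\pmod m$. These depend only on the underlying edge $\{v,v'\}$ of $B$ and the pair $(n,m)$, not on which translated copy the edge came from; self-loops are therefore an intrinsic property of $B$ together with $(n,m)$.

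The decisive step invokes the defining property that $B$ has no nonseed-to-nonseed edges, so at least one endpoint of $\{v,v'\}$ is a seed; after relabeling I take $v=(0,0,s_1)$. The congruences then become $a_2\equiv 0\pmod n$ and $b_2\equiv 0\pmod m$. Since $(a_2,b_2,s_2)$ is an endpoint of an edge of $B$, the definitions in \eq{max} give $\lvert a_2\rvert\le dx_{\max}<n$ and $\lvert b_2\rvert\le dy_{\max}<m$, using the hypotheses $n\ge dx_{\max}+1$ and $m\ge dy_{\max}+1$. Hence $a_2=b_2=0$, and together with $s_1=s_2$ this yields $v'=(0,0,s_1)=v$, so $\{v,v'\}$ is a self-loop of $B$ — impossible. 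This contradiction establishes the lemma, and I would close by noting its purpose (with \thm{proper_edge_coloring}) of guaranteeing that \met{edge_coloring} eventually reaches a self-loop-free patch.

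I expect the main subtlety to lie precisely in that last step. Without the seed-endpoint property one could only bound $\lvert a_1-a_2\rvert\le 2dx_{\max}$, which would force the weaker hypothesis $n>2dx_{\max}$; it is the presence of a seed endpoint (pinning one cell coordinate to $0$) that sharpens the required patch size to $n\ge dx_{\max}+1$. The care needed is therefore in justifying the ``without loss of generality one endpoint is a seed'' reduction and in confirming that the cancellation of the translation $(x,y)$ really removes all dependence on the chosen copy, so that the bound $\lvert a_2\rvert\le dx_{\max}$ applies to the correct coordinate.
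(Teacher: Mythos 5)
Your proof is correct and follows essentially the same route as the paper's: both arguments reduce a putative self-loop to a single edge of $B$ with a seed endpoint, cancel the common translation, and use $\lvert dx\rvert\le dx_{\max}<n$ (and likewise for $dy$) to force $dx=dy=0$ and hence a self-loop in the simple graph $B$. The only difference is presentational — you argue by contradiction where the paper argues directly — and your closing remark correctly identifies the seed-endpoint property as the reason the bound is $dx_{\max}+1$ rather than $2\,dx_{\max}+1$ (the latter being exactly what the paper needs in \lem{simple} to exclude multi-edges).
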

\begin{proof}
Consider the unwrapped patch $P_{n,m}$ with $n=dx_{\mrm{max}}+1$, $m=dy_{\mrm{max}}+1$ and the edge $\{v,v'\}\in P_{n,m}$. By construction of patches, we may assume that $v$ is a translated seed and that $v'$ is a translated seed or a translated nonseed. That is, we may assume there exist an $x\in \mathbb Z_n$ and a $y\in \mathbb Z_m$ such that $v=(0,0,s)+(x,y,0)$ and $v'=(dx,dy,s')+(x,y,0)$, with $(0,0,s),(dx,dy,s')\in V(B)$ and $dx,dy\in\mathbb Z$. Imposing periodic boundary conditions, $w$, acts trivially on $v$; $w(v)\equiv (x \mod n,y\mod m,s)=(x,y,s)$.

If $v'$ is also a translated seed, that is, if $dx,dy=0$, $w$ acts trivially on $v'$ as well and $\{w(v),w(v')\}$ cannot be a self-loop. Therefore, assume $dx\neq 0$ or $dy\neq 0$. If $dx\neq 0$, then $(dx+x) \mod n\neq x$. However, $(dx+x) \mod n= x$ is a necessary condition for $\{w(v),w(v')\})$ to be a self-loop. Similarly, if $dy\neq 0$,$(dy+y) \mod m\neq y$ although  $(dy+y) \mod m= y$ is a necessary condition for $\{w(v),w(v')\}$ to be a self-loop. Therefore, $\tilde P_{n,m}$ is self-loop-free.
\end{proof}

It is a stronger requirement that a wrapped patch becomes simple. The following lemma is used to show \met{edge_coloring} halts in case a type-II coloring is requested ($t=2$).

\begin{lemma}\label{thm:simple}
  For any lattice graph $\mc G$ induced by a basis graph $B$, the wrapped patch $\tilde P_{n,m}(B)$, with $n\geq 2dx_{\max}+1$ and $m\geq 2 dy_{\max}+1$, is simple.
\end{lemma}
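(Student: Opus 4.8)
The plan is to first reduce the statement to the absence of multi-edges and then eliminate those by a short case analysis keyed to the size bounds. Since $dx_{\max},dy_{\max}\ge 0$, the hypotheses $n\ge 2dx_{\max}+1$ and $m\ge 2dy_{\max}+1$ imply $n\ge dx_{\max}+1$ and $m\ge dy_{\max}+1$, so \lem{self-loop-free} already gives that $\tilde P_{n,m}$ is self-loop-free; it remains only to show it has no multi-edge, that is, that no two \emph{distinct} edges of $P_{n,m}$ are sent to the same unordered vertex pair by $w$. I would argue by contradiction: suppose $e_1\ne e_2$ in $E(P_{n,m})$ wrap to the same pair. Using that every edge of $B$ has a seed endpoint (nonseed-to-nonseed edges are forbidden by \defi{basis_graph}), I would write each $e_i$ as a translate of a basis edge $\beta_i=\{(0,0,p_i),(a_i,b_i,q_i)\}$ by a vector $(x_i,y_i)\in\mathbb Z_n\times\mathbb Z_m$, where $|a_i|\le dx_{\max}$ and $|b_i|\le dy_{\max}$ by the definition of $dx_{\max},dy_{\max}$. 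Since $0\le x_i<n$ and $0\le y_i<m$, the map $w$ fixes the seed endpoint and reduces the other modulo $(n,m)$, so the wrapped endpoints of $e_i$ are $(x_i,y_i,p_i)$ and $((x_i+a_i)\bmod n,(y_i+b_i)\bmod m,q_i)$.

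The set equality then splits into two matchings. In the \emph{seed-to-seed} case I would get $x_1=x_2$, $y_1=y_2$, $p_1=p_2$, $q_1=q_2$, together with $a_1\equiv a_2\pmod n$; since $|a_1-a_2|\le 2dx_{\max}<n$ this forces $a_1=a_2$, and likewise $b_1=b_2$, so $\beta_1=\beta_2$ with equal translations and hence $e_1=e_2$, a contradiction. In the \emph{seed-to-nonseed} case I would obtain $p_1=q_2$, $q_1=p_2$ and the two congruences $x_1\equiv x_2+a_2$, $x_2\equiv x_1+a_1\pmod n$; adding them gives $a_1+a_2\equiv 0\pmod n$, and $|a_1+a_2|\le 2dx_{\max}<n$ forces $a_2=-a_1$, and similarly $b_2=-b_1$. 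This is exactly the statement that $\beta_2=\beta_1+(-a_1,-b_1,0)$, i.e.\ $\beta_1$ and $\beta_2$ are translates of one another.

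The main obstacle is closing this second case, where bounding displacements modulo $(n,m)$ no longer suffices on its own; the key insight I would exploit is that a multi-edge across the seed/nonseed matching is equivalent to a pair of translation-equivalent edges in $B$. I would split further: if $\beta_1=\beta_2$, then $a_1=b_1=0$ (the alternative $p_1=q_1$ would make $\beta_1$ a self-loop, impossible in the simple graph $B$), the congruences collapse to $x_1=x_2$, $y_1=y_2$, and once more $e_1=e_2$; if instead $\beta_1\ne\beta_2$, then $B$ contains two distinct edges that are translates of each other by the nonzero vector $(-a_1,-b_1)$, so removing one of them leaves $\mc G$ unchanged, contradicting the requirement in \defi{basis_graph} that $B$ has no redundant edges. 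With both matchings contradictory and self-loop-freeness supplied by \lem{self-loop-free}, $\tilde P_{n,m}$ is simple.
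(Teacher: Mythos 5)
Your proposal is correct and follows essentially the same route as the paper's proof: reduce to excluding multi-edges via \lem{self-loop-free}, split the unordered-pair equality into the two possible matchings, use the bounds $n\geq 2dx_{\max}+1$, $m\geq 2dy_{\max}+1$ to resolve the resulting congruences, and conclude with $e_1=e_2$ in the aligned case and a redundant edge of $B$ in the crossed case. Your handling of the crossed case is in fact slightly more explicit than the paper's (you pin down the translation vector as exactly $(-a_1,-b_1)$ and separate the $\beta_1=\beta_2$ sub-case), but this is a presentational refinement rather than a different argument.
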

\begin{proof}
  Because $2dx_{\max}\geq dx_{\max}$ and similarly for $dy_{\max}$, $\tilde P_{n,m}(B)$, with $n\geq 2dx_{\max}+1$ and $m\geq 2 dy_{\max}+1$, is self-loop-free by \lem{self-loop-free}. To show it does not contain multi-edges, assume, on the contrary, that a multi-edge occurs in $\tilde P_{n,m}(B)$ with $n\geq 2 dx_{\max}+1$ and $m\geq 2 dy_{\max}+1$. We will show that for a general multi-edge containing at least two (necessarily distinctly labeled) edges,  either those two edges are equal (i.e., also having the same label, a contradiction) or one edge is a redundant edge, which is ruled out by construction of $\mc G$.

  In general, for each edge in an unwrapped patch, imposing periodic boundary conditions $w$ acts nontrivially on at most one vertex. Thus, in case a nontrivial multi-edge occurs in $\tilde P_{n,m}(B)$, there exist two unequal edges $e_i=(\{v_i,w(v_i')\},\{v_i,v_i'\})\in\tilde E[P_{n,m}(B)]$ $(i=1,2)$, with
  \begin{equation}
  \begin{aligned}
    v_i&=(x_i,y_i,s_i)\\
    v_i'&=(x_i+dx_i,y_i+dy_i,s_i')\\
    w(v_i')&=\mathlarger ((x_i+dx_i) \mod n,(y_i+dy_i) \mod m, s'_i\mathlarger)
\end{aligned}
  \end{equation}
  For $\{e_1,e_2\}$ to be a multi-edge, we have $\{v_1,w(v_1')\}=\{v_2,w(v_2')\}$. This equality may occur in two distinct ways,
  \begin{equation}
    v_1=v_2       \quad \wedge \quad w(v_1')=w(v_2')\label{eq:case_I},
  \end{equation}
  or
  \begin{equation}
     v_1=w(v_2')   \quad \wedge \quad  w(v_1')=v_2. \label{eq:case_II}
  \end{equation}

  Consider the case of \eq{case_I}. Comparison of the first entries of the triples in \eq{case_I} gives $x_1=x_2$ and $(x_1+dx_1)\mod n =(x_2+dx_2)\mod n$. It follows that $dx_1\mod n = dx_2 \mod n$. Since $n\geq 2 dx_{\max}+1$, we have $dx_1=dx_2$. Using essentially the same arguments, it also follows that $y_1=y_2$ and $dy_1=dy_2$. Comparing the last entries in \eq{case_I}, we have $s_1=s_2$ and $s_1'=s_2'$. Therefore, $v_1=v_2$ and $v_1'=v_2'$, from which it follows that $e_1=e_2$, a contradiction.

  Now, consider the case of \eq{case_II}. Comparing the first entries of the triples in \eq{case_II}, we have $x_1=(x_2+dx_2)\mod n$ and $x_2=(x_1+dx_1)\mod n$, or, after using $0\leq x_i < n$ so that $x_i=x_i\mod n$, $0=(-x_1+x_2+dx_2)\mod n$ and $0=(-x_2+x_1+dx_1)\mod n$. Thus, there exist $\al_x,\al'_x\in\mathbb Z$ such that $-x_1+x_2+dx_2=\al_x n$ and $-x_2+x_1+dx_1=\al'_x n$. From comparing the second entries of the triplets in \eq{case_II}, it similarly follows that there exist $\al_y,\al'_y\in\mathbb Z$ such that $-y_1+y_2+dy_2=\al_y m$ and $-y_2+y_1+dy_1=\al'_y m$. Comparing the last entries in \eq{case_II}, $s_1=s_2'$ and $s_2=s_1'$. Therefore, in the case of \eq{case_II}, there exist $\al_x,\al_x',\al_y,\al_y'$ such that $\{v_1+(\al_xn,\al_ym,0),v_1'+(\al'_x n, \al'_y m ,0)\}=\{v_2,v_2'\}$, which means either $\{v_1,v_1'\}$ or $\{v_2,v_2'\}$ was redundant in the (not wrapped) patch $P_{n,m}(B)$. The latter occurs only if there is a redundant edge in $B$, which is ruled out by construction (\sec{lattice_graphs}).
\end{proof}

Let $B$ induce the lattice graph $\mc G$ and let $B,t$ be the input of \met{edge_coloring}.

\rethm{correctness}{
If $t=2$ or $t=3$, \met{edge_coloring} terminates and the output $C[P_{n,m}(B)],(n,m)$ induces a type-$t$ coloring of $\mc G$. If $t=1$ and \met{edge_coloring} terminates, the output induces a type-I coloring of $\mc G$.
}
\begin{proof}
  In stages~1--3, \met{edge_coloring} iterates through the sequence $a$ (defined in \met{edge_coloring}, stage~2), until a patch size $a_i$ is encountered such that $\tilde P_{a_i}$ is self-loop-free and type-$t$ colorable.

  Assume $t=2$ or $t=3$. For every $i$, either there is some $i'<i$ so that $\tilde P_{a_i'}$ is self-loop-free and type-$t$ colorable, or $\tilde P_{a_i}$ is encountered at stage~2. Take $i$ such that $a_i=(2dx_{\max}+1,2dy_{\max}+1)$. Then we move to stage~4  for some $i'<i$ or we reach $\tilde P_{(2dx_{\max}+1,2dy_{\max}+1)}$, with $dx_{\max}$ and $dy_{\max}$ as in \eq{max}. In case we reach $\tilde P_{(2dx_{\max}+1,2dy_{\max}+1)}$, by \lem{simple} and Vizing's theorem, $\tilde P_{(2dx_{\max}+1,2dy_{\max}+1)}$ is self-loop-free and type-$t$ colorable, so that we also move to stage~4 in this case.

  If instead $t=1$, there is no guarantee that an $(n,m)$ is encountered such that $\tilde P_{n,m}$ is self-loop-free and type-$t$ colorable. We may be stuck in stages~1--3 indefinitely.

  Nevertheless, for any $t\in\{1,2,3\}$, if $\tilde P_{n,m}$ is self-loop-free and type-$t$ colorable for some $(n,m)$, it is type-$t$ colored in stage~3, resulting in a graph $\tilde C$, which is passed to stage~4. It is guaranteed by \thm{proper_edge_coloring} that the unwrapped and edge colored patch $C$ induces a proper edge coloring $\mc C$ of $\mc G$. Additionally, by \lem{max_degree}, this proper edge coloring uses at most $\Delta(\mc G)$ $(t=1)$, $\Delta(\mc G)+1$ $(t=2)$ or any number of colors $(t=3)$. Therefore, $\mc C$ is a type-$t$ coloring of $\mc G$.
\end{proof}

\subsubsection{Running time}
Here, we derive the upper bound on the running time of \alg{edge_coloring_algorithm}, which is a specific use case of \met{edge_coloring}.

The algorithm receives a basis graph $B$ (that induces a lattice graph $\mc G$) as input. For simplicity of the analysis, we assume that $B$ does not have isolated vertices (which play no role in the edge coloring). The algorithm outputs a coloring basis graph $C$ [together with its size $(n,m)$] that induces a type-II edge coloring of $\mc G$. Notice that any type-II coloring is also proper edge coloring, so that the algorithm can also be used when a mere proper edge coloring is desired.

\realg{edge_coloring_algorithm}{
  Compute $dx_{\max},dy_{\max}$ from $B$ [\eq{max}]. Run \met{edge_coloring} with $t=2$ and the initial patch size $(n,m)$ set to $(2dx_{\max}+1,2dy_{\max}+1)$. Use the Misra and Gries edge coloring algorithm to edge color the finite patch in in stage~3.\hfill \qedsymbol
}

It is already shown in the proof of \thm{correctness} that this algorithm is correct and finite.

We now upper bound the worst-case running time of \alg{edge_coloring_algorithm} in the word RAM model of computation \cite{cormen2009algorithms}, where arithmetic operations on words (such as vertices) are counted as a single step. Using the adjacency-list representation of $B$, computing $dx_{\max},dy_{\max}$ takes $O(\lvert E(B) \rvert)$ steps. By \lem{simple} and the fact that the Misra and Gries edge coloring algorithm finds a type-II coloring of any finite simple graph, setting the initial patch size $(n,m)$ to $(2dx_{\max}+1,2dy_{\max}+1)$ assures that stages~1--3 are all run once, in sequence.

We now consider the running time of stages~1--3 of \met{edge_coloring}. Using that $B$ does not contain redundant edges, constructing the adjacency-list representation of $P_{n,m}$ takes $O(\lvert E(P_{n,m})\rvert)$ steps. Wrapping a single edge takes $O(1)$ steps, so that stage~1 of \met{edge_coloring} takes  $O(\lvert E(P_{n,m})\rvert)$ steps in total. Stage~2, checking for self-loops, also requires $O(\lvert E(\tilde P_{n,m})\rvert)=O(\lvert E(P_{n,m})\rvert)$ steps. (If fact, with the current initial $(n,m)$, this stage is redundant, but we keep it for consistency). The Misra and Gries edge coloring algorithm takes $O(\lvert E(\tilde P_{n,m}) \rvert \lvert V(\tilde P_{n,m}) \rvert)=O(\lvert E(P_{n,m}) \rvert^2)$ steps \cite{misra1992constructive,lewis2016guide}. Unwrapping a single edge takes $O(1)$ steps so that stage~4 of \met{edge_coloring} takes $O(\lvert E(P_{n,m})\rvert)$ steps.

So, the running time of \alg{edge_coloring_algorithm} is dominated by the running time of the Misra and Gries edge coloring algorithm, which is $O(\lvert E(P_{n,m}) \rvert^2)=O[(dx_{\max}dy_{\max}\lvert E(B) \rvert)^2]$ steps.
\end{document}